\documentclass[11pt, onecolumn]{IEEEtran}


\addtolength{\topmargin}{9mm}

%
\usepackage[utf8]{inputenc}
\usepackage[T1]{fontenc}
\usepackage{url}
\usepackage{ifthen}
\usepackage{cite}
\usepackage[cmex10]{amsmath} 
\interdisplaylinepenalty=2500 

\usepackage{amsmath}
\usepackage{amsfonts}
\usepackage{amssymb}

\usepackage{psfrag}
\usepackage{enumerate}
\usepackage{url}

\usepackage{graphicx}
\usepackage{color}



\newtheorem{theorem}{Theorem}
\newtheorem{lemma}{Lemma}

\newtheorem{definition}{Definition}

\newcommand{\comment}[1]{}

\def\ISIT{0}


\def\bX{{\bf X}}
\def\bY{{\bf Y}}
\def\bZ{{\bf Z}}
\def\bS{{\bf S}}
\def\bs{{\bf s}}
\def\bD{{\bf D}}
\def\bd{{\bf d}}

\def\bq{{\bf q}}

\def\bx{{\bf x}}
\def\by{{\bf y}}

\def\bW{{\bf W}}
\def\bhW{{\bf \hat{W}}}

\def\CR{\dot{R}}
\def\CC{\dot{C}}
\def\CCP{\dot{C}_{\bot}}

\def\by{{\bf y}}

\def\bc{{\bf c}}
\def\bw{{\bf w}}

\def\CR{\dot{R}}

\def\CC{\dot{C}}

\def\PR{\dot{R^{A}}}

\def\CCP{\dot{C}_{\bot}}

\def\el{\ell}

\def\cA{\mbox{$\cal{A}$}}

\def\cC{\mbox{$\cal{C}$}}
\def\cD{\mbox{$\cal{D }$}}

\def\cE{\mbox{$\cal{E }$}}

\def\cB{\mbox{$\cal{B}$}}

\def\cY{\mbox{$\cal{Y}$}}

\def\cW{\mbox{$\cal{W}$}}

\def\cK{\mbox{$\cal{K}$}}
\def\cT{\mbox{$\cal{T}$}}

\def\Pr{\text{Pr} }

\def\cS{\mbox{$\cal{S}$}}

\def\cW{\mbox{$\cal{W}$}}
\def\cE{\mbox{$\cal{E}$}}

\def\cK{\mbox{$\cal{K}$}}

\def\cN{\mbox{$\cal{N}$}}

\newcommand{\I}[1]{\mathbf{1} (#1)}

\begin{document}
	\IEEEoverridecommandlockouts
	\title{Capacity per Unit-Energy of \\Gaussian Random  Many-Access  Channels
	}
	\author{
		\IEEEauthorblockN{Jithin~Ravi\IEEEauthorrefmark{2}\IEEEauthorrefmark{3} and Tobias~Koch\IEEEauthorrefmark{2}\IEEEauthorrefmark{3}}\\
		\IEEEauthorblockA{\IEEEauthorrefmark{2}%
			Signal Theory and Communications Department, Universidad Carlos III de Madrid, 28911, Legan\'es, Spain\\
			\IEEEauthorrefmark{3}%
			Gregorio Mara\~n\'on Health Research Institute, 28007, Madrid, Spain.\\
			Emails: {\{rjithin,koch\}@tsc.uc3m.es}}
		\thanks{J.~Ravi and T.~Koch have received funding from the European Research Council (ERC) under the European Union's Horizon 2020 research and innovation programme (Grant No.~714161). T.~Koch has further received funding from the Spanish Ministerio de Econom\'ia y Competitividad under Grants RYC-2014-16332 and TEC2016-78434-C3-3-R (AEI/FEDER, EU).}
	}
	
	\maketitle
	\begin{abstract}
	We consider a Gaussian multiple-access channel  with random user activity where the total number of users $\ell_n$ and the average number of active users $k_n$ may be unbounded. For this channel, we characterize the maximum number of bits that can be transmitted reliably per unit-energy in terms of $\ell_n$ and $k_n$. We show that if $k_n\log \ell_n$ is sublinear in $n$, then each user can achieve the single-user capacity per unit-energy. Conversely, if $k_n\log \ell_n$ is superlinear in $n$, then the capacity per unit-energy is zero. We further demonstrate that orthogonal-access schemes, which are optimal when all users are active with probability one, can be strictly suboptimal.
\end{abstract}

\section{Introduction}
Chen \emph{et al.} \cite{ChenCG17} introduced the many-access channel (MnAC) as a multiple-access channel (MAC) where the number of users grows with the blocklength and each user is active with a given probability. This model is motivated by systems consisting of a single receiver and many transmitters, the number of which is comparable or even larger than the blocklength, a situation that may occur, \emph{e.g.}, in a machine-to-machine communication system with many thousands of devices in a given cell that are active only sporadically. In \cite{ChenCG17}, Chen \emph{et al.} considered a Gaussian MnAC with $\el_n$ users, each of which is active with probability $\alpha_n$, and determined the number of messages $M_n$ each user can transmit reliably with a codebook of average power not exceeding $P$. Since then, MnACs have been studied in various papers under different settings. For example, Polyanskiy \cite{Polyanskiy17} considered a Gaussian MnAC where the number of active users grows linearly in the blocklength and each user's payload is fixed. Zadik \emph{et al.} \cite{ZadikPT19} presented improved bounds on the tradeoff between user density and energy-per-bit of this channel. Low-complexity schemes for the MnAC were studied in \cite{OrdentlichP17,VemNCC17}. Generalizations to quasi-static fading MnACs can be found in \cite{KowshikPISIT19,KowshikP19,KowshiKAFPISIT19,KowshiKAFP19}. Shahi \emph{et al.} \cite{ShahiTD18} studied the capacity region of strongly asynchronous MnACs.

Recently, we studied the capacity per unit-energy of the Gaussian MnAC as a function of the order of growth of users when all users are active with probability one \cite{RaviKISIT19}. We showed that if the order of growth is above $n/ \log n$, then the capacity per unit-energy is zero, and if the order of growth is below $n/ \log n$, then each user can achieve the singe-user capacity per unit-energy. Thus, there is a sharp transition between orders of growth where interference-free communication is feasible and orders of growth where reliable communication at a positive rate is infeasible. We further showed that the capacity per unit-energy can be achieved by an \emph{orthogonal-access scheme} where the codewords of different users are orthogonal to each other.

In this paper, we extend the analysis of \cite{RaviKISIT19} to a random-access setting. In particular, we consider a setting where the total number of users $\el_n$ may grow as an arbitrary function of the blocklength and the probability $\alpha_n$ that a user is active may be a function of the blocklength, too.
Let $k_n = \alpha_n \el_n$ denote the average number of active users. 
We demonstrate that if $k_n \log \el_n$ is sublinear in $n$, then each user can achieve the single-user capacity per unit-energy. Conversely, if $k_n \log \el_n$ is superlinear in $n$, then the capacity per unit-energy is zero. Hence, there is again a sharp transition between orders of growth where interference-free communication is feasible and orders of growth where reliable communication at a positive rate is infeasible, but the transition threshold depends on the behaviors of both $\el_n$ and $k_n$.
We further show that orthogonal-access schemes, which are optimal when $\alpha_n=1$, are strictly suboptimal when $\alpha_n \to 0$.

The rest of the paper is organized as follows. Section~\ref{Sec_model} introduces the system model. Section~\ref{sec_joint} presents our main results. Section~\ref{sec_average} briefly discusses the capacity per unit-energy when the error probability is replaced by the so-called \emph{per-user probability of error} considered, e.g., in \cite{Polyanskiy17, OrdentlichP17,VemNCC17,ZadikPT19,KowshikPISIT19,KowshikP19,KowshiKAFPISIT19,KowshiKAFP19}.

\section{Problem Formulation and Preliminaries}
\label{Sec_model}
\subsection{Model and Definitions}
\label{Sec_Def}
Consider a network with $\el$ users that, if they are active, wish to transmit their messages $W_i, i=1, \ldots, \el$ to one common receiver. The messages are assumed to be independent and uniformly distributed on $\{1,\ldots,M_n^{(i)}\}$. To transmit their messages, the users send a codeword of $n$ symbols over the channel, where $n$ is referred to as the \emph{blocklength}. We consider a many-access scenario where the number of users $\el$ grows with $n$, hence, we denote it as $\el_n$.  We further assume that a user is active with probability $\alpha_n$, where $\alpha_n \to \alpha \in [0,1]$ as $n$ tends to infinity.
Since an inactive user is equivalent to a user transmitting the all-zero codeword, we can express the distribution of the $i$-th user's message as
\begin{align}
\Pr\{W_i = w\}  =
\begin{cases}
1 - \alpha_n,  & \quad w=0 \\
\frac{\alpha_n}{M_n^{(i)}}, & \quad w \in \{1,\ldots,M_n^{(i)}\}
\end{cases}
\label{Eq_messge_def}
\end{align}
and assume that the codebook is such that message $0$ is mapped to the all-zero codeword. We denote the average number of active users at blocklength $n$  by $k_n$, i.e., $k_n = \alpha_n \el_n$.

We  consider a Gaussian channel model where the received vector $\bY$ is given by
\begin{align*}
\bY & = \sum_{i=1}^{\el_n} \bX_i(W_i) + \bZ. 
\end{align*}
Here $ \bX_i(W_i)$ is the $n$-length transmitted codeword from user $i$ for message $W_i$ and $\bZ$ is 
a vector of $n$ i.i.d. Gaussian components $Z_j \sim \cN(0, N_0/2)$ independent of $\bX_i$.

\begin{definition}
	\label{Def_nMCode}
	For $0 \leq \epsilon < 1$, an  $(n,\bigl\{M_n^{(\cdot)}\bigr\},\bigl\{E_n^{(\cdot)}\bigr\}, \epsilon)$ code for the Gaussian many-access channel consists of:
	\begin{enumerate}
		\item Encoding functions $f_i: \{0, 1,\ldots,M_n^{(i)}\} \rightarrow \mathbb{R}^n$, \mbox{$i =1,\ldots, \el_n$} which map user $i$'s message to the codeword $\bX_i(W_i)$, satisfying the energy constraint
		\begin{align}
		\label{Eq_energy_consrnt}
		\sum_{j=1}^{n} x_{ij}^2(w_i) \leq E_n^{(i)}
		\end{align}
		where $x_{ij}$ is the $j$-th symbol of the transmitted codeword. If $W_i =0$, then $x_{ij} =0$ for $j=1, \ldots, n$.
		\item Decoding function $g: \mathbb{R}^n \rightarrow \{ 0,1,\ldots,M_n^{(1)}\} \times \ldots \times \{ 0,1,\ldots,M_n^{(\el_n)}\} $ which maps the received vector $\bY$ to the messages of all users and whose probability of error $P_{e}^{(n)}$ satisfies
	\end{enumerate} 
	\begin{align}
	\label{Eq_prob_err}
	P_{e}^{(n)} \triangleq  \Pr\{ g(\bY) \neq (W_1,\ldots,W_{\el_n}) \} \leq \epsilon.
	\end{align}
\end{definition}

An $(n,\{M_n^{(\cdot)}\},\{E_n^{(\cdot)}\}, \epsilon)$ code is said to be \emph{symmetric} if $M_n^{(i)} = M_n$ and $E_n^{(i)} = E_n$ for all $i=1, \ldots, \el_n$. For compactness, we denote such a code by $(n, M_n, E_n, \epsilon)$. In this paper, we restrict ourselves to symmetric codes.

\begin{definition}
	\label{Def_Sym_Rate_Cost}
	For a symmetric code, the rate per unit-energy $\CR$ is said to be $\epsilon$-achievable if for every $\delta > 0$ there exists an $n_0$ such that if $n \geq n_0$, then an $(n,M_n,E_n, \epsilon)$ code can be found whose rate per unit-energy satisfies $\frac{\log M_n}{ E_n} > \CR - \delta$. Furthermore, $\CR$ is said to be achievable if it is $\epsilon$-achievable for all $0 < \epsilon < 1$.  The capacity per unit-energy $\CC$ is the supremum of all achievable rates per unit-energy.
\end{definition}

\if \ISIT 1

\subsection{Order Notations}
Let $\{a_n\}$ and $\{b_n\}$ be two sequences of nonnegative real numbers.
We write $a_n = O(b_n)$  if there exists an $n_0$ and a positive real number $S$ such that for all $n \geq n_0$, $a_n \leq S b_n$. We write $a_n = o(b_n)$ if $ \lim\limits_{n\rightarrow \infty} \frac{a_n}{b_n} = 0$, and $a_n = \Omega(b_n)$ if $\liminf\limits_{n \rightarrow \infty} \frac{a_n}{b_n} >0$. 
Similarly, $a_n = \Theta (b_n)$ indicates that there exist $ 0 < l_1<l_2$ and $n_0$ such that 
$l_1 b_n \leq a_n \leq l_2 b_n$
for all $n \geq n_0$.  
We  write
$a_n = \omega (b_n)$ if $\lim\limits_{n\rightarrow \infty}  \frac{a_n}{b_n} = \infty$.

\else

\subsection{Order Notations}
Let $\{a_n\}$ and $\{b_n\}$ be two sequences of nonnegative real numbers.
We write $a_n = O(b_n)$  if there exists an $n_0$ and a positive real number $S$ such that for all $n \geq n_0$, $a_n \leq S b_n$. We write $a_n = o(b_n)$ if $ \lim\limits_{n\rightarrow \infty} \frac{a_n}{b_n} = 0$, and $a_n = \Omega(b_n)$ if $\liminf\limits_{n \rightarrow \infty} \frac{a_n}{b_n} >0$. 
Similarly, $a_n = \Theta (b_n)$ indicates that there exist $ 0 < l_1<l_2$ and $n_0$ such that 
$l_1 b_n \leq a_n \leq l_2 b_n$
for all $n \geq n_0$.  
We further write
$a_n = \omega (b_n)$ if $\lim\limits_{n\rightarrow \infty}  \frac{a_n}{b_n} = \infty$.

\fi

\section{Capacity per Unit-Energy}
\label{sec_joint}
In this section, we discuss our results on the behavior of capacity per unit-energy for Gaussian random MnACs. Our main result is 
Theorem~\ref{Thm_random_JPE}, which characterizes the capacity per unit-energy in terms of $\el_n$ and $k_n$.  In Theorem~\ref{Thm_ortho_accs}, we characterize the behavior of the largest rate  per unit-energy that can be achieved by an orthogonal-access scheme. These results are presented in Subsection~\ref{Sec_results}. The proofs of Theorems~\ref{Thm_random_JPE} and~\ref{Thm_ortho_accs} are given in Subsections~\ref{Sec_proof_JPE} and~\ref{Sec_proof_ortho_access}, respectively.

Before presenting our results, we first note that the case where $k_n$ vanishes as $n \to \infty$ is uninteresting. 
Indeed, this case only happens if $\alpha_n \to 0$. 
Then, the probability that all the users are inactive, given by $\bigl( (1-\alpha_n)^{\frac{1}{\alpha_n}}\bigr)^{k_n}$,
tends to one since  $(1-\alpha_n)^{\frac{1}{\alpha_n}} \to 1/e $ and $k_n \to 0$.  Consequently, a code with $M_n=2$ and $E_n =0$ for all $n$ and a decoding function that always declares that all users are inactive achieve an error probability
$P_{e}^{(n)}$ that vanishes as $n \to \infty$. This implies that $\CC= \infty$. In the following, we avoid this trivial case and assume that $\el_n$ and $\alpha_n$ are such that $k_n$ is bounded away from zero.

\subsection{Our Main Results}
\label{Sec_results}
\begin{theorem}
	\label{Thm_random_JPE}
	Assume that $k_n =\Omega(1)$. Then the capacity per unit-energy of the Gaussian random MnAC has the following behavior:
	\begin{enumerate}
		\item 	If $k_n \log \el_n = o(n)$,   then $\CC = (\log e )/ N_0$. \label{Thm_achv_part}
		\item 	If $k_n \log \el_n = \omega(n)$, then $\CC =0$. 	\label{Thm_conv_part}
	\end{enumerate}
\end{theorem}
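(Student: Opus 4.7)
The proof splits into achievability for Part~\ref{Thm_achv_part} and converse for Part~\ref{Thm_conv_part}; the matching upper bound $\dot{C}\leq(\log e)/N_0$ in Part~\ref{Thm_achv_part} follows from a standard single-user genie argument in which all other users' messages are revealed to the decoder, reducing the problem to the classical AWGN capacity per unit-energy.

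For the achievability in Part~\ref{Thm_achv_part}, my plan is to use a non-orthogonal random-coding construction in the spirit of~\cite{RaviKISIT19}: for each user $i$, draw $M_n$ non-zero codewords i.i.d.\ from a peaky distribution (e.g.\ pulse-position modulation of energy $E_n$, or i.i.d.\ Gaussian components of variance $E_n/n$), and let message $0$ map to the all-zero codeword. Decoding is maximum likelihood on the joint message alphabet. The error probability can be dissected by a union bound indexed by the number $t$ of users decoded incorrectly: there are at most $\binom{\ell_n}{t}M_n^t$ competing message vectors at Hamming distance $t$, each producing a Gaussian pairwise-error probability of order $\exp(-c\, t E_n/N_0)$, so
\begin{equation*}
P_e^{(n)}\;\lesssim\; \sum_{t\geq 1} \exp\bigl(t\log\ell_n + t\log M_n - c\,t E_n\bigr).
\end{equation*}
This vanishes once $\log M_n/E_n$ is chosen slightly below $(\log e)/N_0$ and $E_n$ is taken large enough to absorb the $\log\ell_n$ slack. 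The hypothesis $k_n\log\ell_n = o(n)$ is what guarantees that the overall message-plus-activity entropy $k_n\log M_n + \ell_n h(\alpha_n)\approx k_n\log(\ell_n M_n/k_n)$ can be carried across $n$ channel uses at the single-user rate per unit-energy. The main technical challenge will be extracting the sharp constant $(\log e)/N_0$: the naive minimum-distance bound loses a factor of two, so a more careful analysis that separates activity-misdetection events from message-confusion events (or a Verd\'u-style capacity-per-unit-cost decoder) will be needed.

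For the converse in Part~\ref{Thm_conv_part}, the plan is to combine Fano's inequality with a sum-energy AWGN bound. Since the $W_i$ are independent,
\begin{equation*}
H(W) = \ell_n h(\alpha_n) + k_n \log M_n \;\geq\; k_n\log(\ell_n/k_n) + k_n \log M_n,
\end{equation*}
while data processing through the sum signal $\sum_i\bX_i$ yields
\begin{equation*}
I(W;\bY) \;\leq\; \tfrac{n}{2}\log\bigl(1 + 2 k_n E_n/(nN_0)\bigr).
\end{equation*}
Substituting both into Fano's inequality $H(W)-I(W;\bY)\leq 1 + P_e^{(n)}\ell_n\log(M_n+1)$ and using $\log M_n\geq(\dot{R}-\delta)E_n$, one obtains in the low-SNR regime $k_n E_n=O(n)$ the linearized inequality $k_n\log(\ell_n/k_n) + k_n\log M_n \lesssim k_n E_n\log e/N_0$, and in the high-SNR regime the sublinear bound $k_n\log M_n \lesssim (n/2)\log\bigl(k_n E_n/n\bigr)$. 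Combining the two regimes, any positive $\dot{R}$ forces $k_n\log\ell_n = O(n)$, contradicting the hypothesis. The main obstacle here is the Fano penalty $P_e^{(n)}\ell_n\log(M_n+1)$, which is indexed by $\ell_n$ rather than the effective $k_n$ and can therefore dominate $k_n\log M_n$ when $\alpha_n\to 0$; since $\epsilon$-achievability only fixes $P_e^{(n)}\leq\epsilon$, handling this cleanly will require either a strengthened Fano-type inequality that exploits the sharply peaked distribution of $W$, or a careful joint limit in $\epsilon$ and $\alpha_n$.
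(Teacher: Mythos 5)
The main gap is in your converse. Your two-regime Fano argument derives its lower bound on the required energy solely from the activity entropy $\ell_n H_2(\alpha_n)\geq k_n\log(\ell_n/k_n)$, and this quantity is far too small when $\alpha_n$ is bounded away from zero: for $\alpha_n=1$ and $\ell_n=k_n=n$ (so $k_n\log\ell_n=n\log n=\omega(n)$), your low-SNR inequality reduces to $\log M_n\lesssim E_n\log e/N_0$, which is perfectly consistent with a positive rate per unit-energy and with $k_nE_n=O(n)$, so no contradiction is reached and the claimed conclusion ``any positive $\dot R$ forces $k_n\log\ell_n=O(n)$'' does not follow. What is actually needed---and what carries the entire weight of the paper's converse---is the separate statement that $\dot R>0$ and $P_e^{(n)}\to 0$ force $E_n=\Omega(\log\ell_n)$ (Lemma~3 of the paper). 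Combined with $k_n=\Omega(1)$ and $k_n\log\ell_n=\omega(n)$, this gives $k_nE_n=\omega(n)$, which makes the term $\frac{n}{2k_nE_n}\log\bigl(1+2k_nE_n/(nN_0)\bigr)$ vanish and kills the rate. That lemma is not a variant of Fano applied to the whole message vector: it is proved by partitioning the set of message vectors with a given number of active users into subsets of cardinality at least $\ell_n+1$ and Hamming diameter at most $8$, and applying Birg\'e's inequality to the resulting multiple-hypothesis test, so that the signals being distinguished differ by energy $O(E_n)$ while the number of hypotheses is $\Omega(\ell_n)$. Your proposal contains no counterpart of this step. (Your other worry---the Fano penalty scaling with $\ell_n\log M_n$---is real but is handled exactly as you guessed, via the strengthened conditional-entropy bound of Chen \emph{et al.} in which the penalty scales with $k_n\log M_n+\ell_nH_2(\alpha_n)$.)

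On achievability, your genie upper bound and the general shape (a union bound indexed by the number of wrongly decoded users) are reasonable, but as you yourself note the naive pairwise-error analysis loses a constant factor, and your sketch offers no concrete route to $(\log e)/N_0$. The paper recovers the sharp constant with a two-phase scheme: a fraction $b$ of the blocklength and of the energy is spent on per-user signatures from which the receiver first detects the active set (an adaptation of the Chen--Chen--Guo detection analysis to an energy rather than a power constraint, with the event that more than $\xi k_n$ users are active controlled by Markov's inequality), and the remaining energy is used for message transmission decoded with Gallager's multiaccess error exponent, the codewords being conditioned to satisfy the energy constraint. Letting $b\to0$ and the Gallager parameter $\rho\to0$ yields $(\log e)/N_0$. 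A single-phase argument would still need a Gallager-type exponent jointly over activity and message errors rather than a minimum-distance bound; ``separating misdetection from confusion events'' is precisely where the work lies, so this part of your plan is incomplete rather than wrong.
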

\begin{proof}
	See Subsection~\ref{Sec_proof_JPE}.
\end{proof}

Theorem~\ref{Thm_random_JPE} demonstrates that there is a sharp transition between orders of growth where interference-free communication is feasible and orders of growth where no positive rate per unit-energy is feasible. The same behavior was observed for the non-random-access case, where the transition threshold seperating these two regimes is at $n/ \log n$~\cite{RaviKISIT19}. When $\alpha_n$ converges to a positive value, the order of growth of $k_n \log \el_n$ coincides with that of both $k_n \log k_n$ and  $\el_n \log \el_n$. In this case, the transition threshold in the random-access case is also at $n /\log n$.
However, when $\alpha_n \to 0$, the orders of growth of $k_n$ and $\el_n$ are different and the transition threshold for $\el_n$ is in general larger than $n / \log n$, so random user-activity enables interference-free communication at an order of growth above the limit $n/ \log n$ of the non-random-access case. Similarly, when $\alpha_n \to 0$, the transition threshold for $k_n$ is in general smaller than $n/ \log n$, so treating a random MnAC with $\el_n$ users as a non-random MnAC with $k_n$ users may be overly-optimistic.

In~\cite{RaviKISIT19}, it was shown that, when \mbox{$k_n =o(n/ \log n)$} and \mbox{$\alpha_n =1$}, an orthogonal-access scheme is sufficient to achieve the capacity per unit-energy. It turns out that this is not the case anymore when $\alpha_n \to 0$.
\begin{theorem}
	\label{Thm_ortho_accs}
	Assume that $k_n = \Omega(1)$. The largest rate per unit-energy $\CCP$ achievable with an orthogonal-access scheme satisfies the following:
	\begin{enumerate}[1)]
		\item  If  $ \el_n = o(n/ \log n)$, then $\CCP = (\log e )/ N_0$. \label{Thm_ortho_accs_achv}
		\item If $ \el_n = \omega(n/ \log n)$, then $\CCP =0$. \label{Thm_ortho_accs_conv}
	\end{enumerate}
\end{theorem}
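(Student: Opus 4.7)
\noindent\emph{Proof proposal.}
For part~1 (achievability when $\el_n = o(n/\log n)$), I would use time-division orthogonal multiplexing: partition $\{1,\dots,n\}$ into $\el_n$ disjoint slots of length $n' = \lfloor n/\el_n\rfloor$, assigning slot $i$ exclusively to user $i$. In its slot, user $i$ employs a single-user codebook $\{0,\mathbf{c}_{i,1},\dots,\mathbf{c}_{i,M_n}\}$ with $\|\mathbf{c}_{i,j}\|^2 = E_n$, where the $0$ codeword represents the inactive state. Since $\el_n = o(n/\log n)$ yields $n' = \omega(\log\el_n)$, choosing $E_n = c\log\el_n$ and $\log M_n = RE_n$ with $R < (\log e)/N_0$ places each slot in the low-SNR regime $2E_n/(n'N_0) = o(1)$. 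A standard random-coding argument for the single-user Gaussian channel (as in Verd\'u's capacity-per-unit-cost construction) then gives codes with per-slot error at most $\exp(-E_n g(R))$ for some exponent $g(R) > 0$; choosing $c$ large enough makes this $o(1/\el_n)$, so a union bound over the $\el_n$ slots yields vanishing overall error. Letting $R \uparrow (\log e)/N_0$ gives $\CCP \ge (\log e)/N_0$, and the matching upper bound follows from the single-user capacity per unit-energy.

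For part~2 (converse when $\el_n = \omega(n/\log n)$), suppose for contradiction that some $R_0 > 0$ is $\epsilon$-achievable via orthogonal access for every $\epsilon \in (0,1)$. Since the $\el_n$ slots are decoded independently, the per-slot error satisfies $p_e = 1 - (1-\epsilon)^{1/\el_n} = O(1/\el_n)$, and the decomposition $p_e = (1-\alpha_n)p_{\mathrm{fa}} + \alpha_n p_{\mathrm{err|act}}$ forces $p_{\mathrm{fa}} = O(1/\el_n)$ and (using $k_n = \Omega(1)$) $p_{\mathrm{err|act}} \le c_\epsilon < 1$ for $\epsilon$ small enough. A Neyman--Pearson analysis of the receiver's decision region around the $0$ codeword, union-bounded over the $M_n$ non-zero codewords, forces
\begin{equation*}
E_n \ge N_0 \ln(M_n\el_n)(1+o(1)),
\end{equation*}
which after substituting $\log M_n = R_n E_n$ with $R_n \le R_0 < (\log e)/N_0 = 1/(N_0 \ln 2)$ rearranges to $E_n = \Omega(\log\el_n)$. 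Combined with $n' = n/\el_n = o(\log n) = o(\log\el_n)$, the peak slot SNR $u := 2E_n/(n'N_0) \to \infty$. Conditioning on the user being active and applying Fano's inequality to the $M_n$-ary decoding then yields
\begin{equation*}
\log M_n \le \frac{(n'/2)\log(1+u) + 1}{1 - p_{\mathrm{err|act}}},
\end{equation*}
so
\begin{equation*}
R_n = \frac{\log M_n}{E_n} \le \frac{1}{1 - c_\epsilon}\!\left[\frac{\log(1+u)}{uN_0} + \frac{1}{E_n}\right] \longrightarrow 0,
\end{equation*}
contradicting $R_n \ge R_0 > 0$. Hence $\CCP = 0$.

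The main technical obstacle is establishing the energy lower bound $E_n = \Omega(\log\el_n)$ rigorously. Because the optimal receiver uses a joint MAP rule over all $M_n + 1$ hypotheses under the strongly non-uniform prior $(1-\alpha_n, \alpha_n/M_n, \dots, \alpha_n/M_n)$, the Neyman--Pearson step must be carried out for this prior rather than via naive pairwise tests between $0$ and each codeword. Controlling the miss-detection probability simultaneously with the false-alarm constraint requires a careful large-deviations (or meta-converse) argument tailored to the Gaussian shift family, exploiting the fact that the $M_n$ non-zero codewords must lie outside a ball of radius $\approx \sigma\sqrt{2\ln(M_n\el_n)}$ around the origin for $p_{\mathrm{fa}} = O(1/\el_n)$ to be compatible with miss probabilities bounded away from $1$.
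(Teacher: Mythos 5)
There is a genuine gap in your Part~1. You put the inactive state into the slot codebook as an all-zero codeword and then assert that ``a standard random-coding argument for the single-user Gaussian channel (as in Verd\'u's capacity-per-unit-cost construction)'' yields per-slot error $\exp(-E_n g(R))$ with $g(R)>0$ for all $R<(\log e)/N_0$. That argument is for $M_n$ equal-energy codewords and does not cover the forced zero-energy codeword. The problem is the false-alarm event: under $W=0$ the received slot is pure noise, and with ML decoding the receiver declares some nonzero message as soon as $\max_j\langle \mathbf{Z},\mathbf{c}_j\rangle>E_n/2$. Each projection is $\mathcal{N}(0,E_nN_0/2)$, so the maximum over $M_n$ (near-)orthogonal codewords concentrates around $\sqrt{E_nN_0\ln M_n}$, which exceeds $E_n/2$ precisely when $\ln M_n> E_n/(4N_0)$ --- i.e., the false-alarm probability tends to \emph{one} for every rate per unit-energy above $(\log e)/(4N_0)$, and your scheme as described fails there. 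This is exactly why the paper's scheme spends a (vanishing) fraction $tE_n$ of the energy on a dedicated pilot symbol: activity detection then reduces to a single binary test with error $Q(\sqrt{tE_n}/2)$, with no union over $M_n$, and the remaining $(1-t)E_n$ supports any message rate below $(1-t)(\log e)/N_0$. Your approach can be repaired without a pilot by using the MAP (or a threshold) decoder that biases toward the zero hypothesis by roughly $\tfrac{N_0}{2}\ln M_n$ in log-likelihood --- one can check that both the false-alarm and miss exponents then stay positive for all $\dot R<(\log e)/N_0$ --- but this must be stated and analyzed; it does not follow from the standard construction you cite.

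Your Part~2 is sound in outline and takes a genuinely different route to the key energy bound than the paper. The paper proves $E_n=\Omega(\log\el_n)$ for \emph{arbitrary} (non-orthogonal) codes via Birg\'e's inequality applied to Hamming-balls of message tuples (Lemma~3), and then derives the contradiction with Fano exactly as you do. For orthogonal access your per-slot hypothesis-testing argument is simpler and can be made rigorous: pick a user whose slot error is $O(1/\el_n)$, take a codeword $\mathbf{c}_j$ whose conditional error is bounded away from one, and apply the data-processing inequality for relative entropy to the test of $\mathcal{N}(0,\tfrac{N_0}{2}I)$ versus $\mathcal{N}(\mathbf{c}_j,\tfrac{N_0}{2}I)$ with the decoder's ``declare inactive'' region; this gives $E_n/N_0\geq\|\mathbf{c}_j\|^2/N_0\geq(1-c_\epsilon)\ln\el_n-O(1)$ directly. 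Your stronger intermediate claim $E_n\geq N_0\ln(M_n\el_n)(1+o(1))$ is neither needed nor justified by the per-codeword union-bound heuristic you give (the false alarm only needs to be $O(1/\el_n)$ in total, not $O(1/(M_n\el_n))$ per codeword). Two further loose ends: a symmetric orthogonal code may assign unequal slot lengths, so you should argue via the user with $n_i\leq n/\el_n$ rather than assuming equal slots; and when $\alpha_n\to1$ the decomposition does not force $p_{\mathrm{fa}}=O(1/\el_n)$, so the divergence argument must instead be run between two nonzero codewords (which still yields $E_n=\Omega(\log\el_n)$).
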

\begin{proof}
	See Subsection~\ref{Sec_proof_ortho_access}.
\end{proof}

Observe that there is again a sharp transition between the orders of growth of $\el_n$ where interference-free communication is feasible and orders of growth where no positive rate per unit-energy is feasible. In contrast to the optimal transmission scheme, the transition threshold for orthogonal-access schemes happens at $n/ \log n$, irrespective of the behavior of $\alpha_n$. Thus, by using an orthogonal-access scheme, we treat the random MnAC as if it were a non-random MnAC. Theorem~\ref{Thm_ortho_accs} also implies that there are orders of growth of $\el_n$ and $k_n$ where non-orthogonal-access schemes are necessary to achieve the capacity per unit-energy.

\subsection{Proof of Theorem~\ref{Thm_random_JPE}}
\label{Sec_proof_JPE}
To prove Part~\ref{Thm_achv_part}), we use an achievability scheme  with a decoding process consisting of two steps.  First, the receiver determines which users are active. If the number of estimated active users is less than or equal to $\xi k_n$ for some positive integer $\xi$, then the receiver decodes the messages of all active users. If the number of estimated active users is greater than $\xi k_n$, then it declares an error. The total error probability of this scheme is upper-bounded by
\begin{equation*}
P(\cD) + \sum_{k'_n=1}^{\xi k_n}\Pr\{K'_n=k_n'\}P\bigl(\cE_m(k'_n)\bigr) + \Pr\{K'_n>\xi k_n\}
\end{equation*}
where $K'_n$ is the number of active users, $P(\cD)$ is the probability of a detection error, and $P\bigl(\cE_m(k'_n)\bigr)$ is the probability of a decoding error when the receiver has correctly detected that there are $k'_n$ users active. In the following, we show that these probabilities vanish as $n\to\infty$ for any fixed, positive integer $\xi$. Furthermore, by Markov's inequality, we have that $\Pr\{K'_n>\xi k_n\}\leq 1/\xi$. It thus follows that the total probability of error vanishes as we let first $n\to\infty$ and then $\xi\to\infty$.

To enable user detection at the receiver, out of $n$ channel uses, each user uses the first $n''$ channel uses to send its signature and  \mbox{$n'=n -n''$} channel uses for sending the message. Furthermore, the
signature uses energy $E_n''$ out of $E_n$, while the energy used for sending message is given by $E_n' = E_n -E_n''$.  

Let  $\bs_i$ denote the signature of  user $i$ and $\tilde{\bx}_i(w_i)$ denote the codeword of length $n'$ for sending the message $w_i$, where $w_i =1,\ldots, M_n$. Then the codeword $\bx_i(w_i)$ is given by
\begin{align*}
\bx_i(w_i) = (\bs_i,  \tilde{\bx}_i(w_i)).
\end{align*}
Explicitly, for a given arbitrary $0 < b < 1$, we let
\begin{equation}
n'' = bn, \quad \label{Eq_channel_choice}
\end{equation}
and 
\begin{equation}
\label{Eq_energy_choice}
E_n'' = bE_n,  \quad E_n = c_n \ln \el_n 
\end{equation}
with $c_n = \ln (\frac{n}{k_n\ln \el_n})$.

Based on the first $n''$ received symbols, the receiver detects which users are active. We need the following lemma to show that the detection error probability  vanishes as $n \to \infty$.
\begin{lemma}
	\label{Lem_usr_detect}
	If $k_n \log \el_n = o(n)$, then there exist signatures $\bs_i, i=1, \ldots, \el_n$ with  $n''$ channel uses  and energy $E_n''$ such that $P(\cD)$ vanishes as $n \to \infty$.
\end{lemma}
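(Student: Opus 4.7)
The plan is to construct the signatures via a random Gaussian ensemble, to decode the set of active users by a minimum-distance rule restricted to subsets of cardinality at most $\xi k_n$, and to bound the expected detection error by a union bound over alternative subsets; existence of good deterministic signatures then follows from a standard expurgation argument.

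Explicitly, I would draw each $\bs_i$ with i.i.d.\ $\cN(0,E_n''/n'')$ components, so that $\mathbb{E}[\|\bs_i\|^2]=E_n''$; a routine concentration/rescaling step enforces the per-user energy constraint at negligible cost. Writing $\bY''$ for the first $n''$ received symbols, let $\hat{\cA}$ be the subset of $\{1,\ldots,\el_n\}$ of cardinality at most $\xi k_n$ minimizing $\|\bY''-\sum_{i\in\cS}\bs_i\|^2$. Since the event $\{K'_n>\xi k_n\}$ is handled separately in the overall error decomposition, it suffices to bound $\Pr\{\hat{\cA}\neq\cA\}$ on $\{|\cA|\le\xi k_n\}$.

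For an alternative $\cS\neq\cA$ with $d=|\cA\triangle\cS|\ge 1$, set $\mathbf{u}=\sum_{i\in\cA\setminus\cS}\bs_i-\sum_{i\in\cS\setminus\cA}\bs_i$. The pairwise error event $\{\|\mathbf{u}+\bZ''\|^2<\|\bZ''\|^2\}$ equals $\{\langle\mathbf{u},\bZ''\rangle<-\|\mathbf{u}\|^2/2\}$, which conditional on $\mathbf{u}$ has probability $Q(\|\mathbf{u}\|/\sqrt{2N_0})\le\exp(-\|\mathbf{u}\|^2/(4N_0))$. Because $\mathbf{u}$ has i.i.d.\ $\cN(0,dE_n''/n'')$ components, the chi-squared moment generating function yields, after averaging over the signatures, the pairwise bound $(1+dE_n''/(2n''N_0))^{-n''/2}$. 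The number of alternatives with $|\cA\triangle\cS|=d$ is at most $\binom{\el_n}{d}\le\el_n^d$, so a union bound gives
\begin{equation*}
\overline{P(\cD)}\le\sum_{d=1}^{2\xi k_n}\el_n^d\,\Bigl(1+\tfrac{dE_n''}{2n''N_0}\Bigr)^{-n''/2},
\end{equation*}
where the overline denotes averaging over the random signature ensemble.

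The final step is to show the right-hand side vanishes. Let $s_n=k_n\ln\el_n/n$, so $s_n\to 0$ by hypothesis and $c_n=\ln(1/s_n)\to\infty$. Substituting $n''=bn$ and $E_n''=bc_n\ln\el_n$ gives $dE_n''/(2n''N_0)\le\xi s_n\ln(1/s_n)/N_0\to 0$ uniformly in $d\le 2\xi k_n$, which legitimizes the linearization $(1+x)^{-n''/2}\le\exp(-(n''x/2)(1+o(1)))$ uniformly in that range. Each summand then becomes $\el_n^{d\,(1-bc_n/(4N_0)(1+o(1)))}$; since $c_n\ln\el_n\to\infty$, the sum is a geometric-type series whose ratio vanishes, and so the sum itself vanishes. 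The main obstacle is this uniform control of the chi-square MGF linearization across all $d$ up to $2\xi k_n$, and it is guaranteed precisely by the hypothesis $k_n\ln\el_n=o(n)$.
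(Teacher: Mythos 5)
Your proposal is correct, but it follows a genuinely different and more elementary route than the paper. The paper imports the machinery of Chen \emph{et al.}'s Theorem~2: signatures drawn from a Gaussian distribution truncated to the energy ball (with the resulting $(1/\mu)$ normalization factors shown to be uniformly close to one), a minimum-distance decoder over $\{\bd: 1\le|\bd|\le v_n\}$ with $v_n=k_n(1+c_n)$, and a Gallager-type bound with tilting parameters $\lambda,\rho$ whose exponent $g^n_{\lambda,\rho}(\kappa_1,\kappa_2,\bd)$ must be shown bounded away from zero uniformly over miss-detection and false-alarm counts; the all-inactive case $|\bD^a|=0$ then needs a separate argument because the decoder's search set excludes $\bd=0$. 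You instead use the exact pairwise error probability $Q(\|\mathbf{u}\|/\sqrt{2N_0})$, average it over the Gaussian ensemble via the chi-square moment generating function, and apply a plain union bound over symmetric differences. This works precisely because the chosen energy $E_n''=bc_n\ln\el_n$ with $c_n\to\infty$ makes the per-distance-$d$ exponent $dE_n''/(4N_0)$ dominate the combinatorial term $d\ln\el_n$, and because $v_nE_n''/n''\to 0$ keeps you uniformly in the linearization regime; your calculation of the geometric tail is right. What you lose relative to the paper is only a sharper error exponent (irrelevant here, since only vanishing error is needed); what you gain is a much shorter argument that also handles the empty active set and bounded $\el_n\ge 2$ without case distinctions, and that replaces the paper's truncated-input bookkeeping with a one-line expurgation (legitimate here since $\log\el_n=o(n)$ lets all $\el_n$ signatures concentrate simultaneously).

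Two minor points. First, with the search radius fixed at $\xi k_n$ your bound controls only $\Pr\{\hat{\cA}\neq\cA,\,|\cA|\le\xi k_n\}$, whereas the lemma asserts that the unconditional $P(\cD)$ vanishes; this is harmless for the theorem (as you note), and is fixed outright by enlarging the radius to $v_n=k_n(1+c_n)$ as in the paper, under which your estimate $dE_n''/(2n''N_0)\le s_n\ln(1/s_n)(1+\ln(1/s_n))/N_0\to 0$ still holds. Second, the "routine rescaling" for the energy constraint should be stated as generating with variance $(1-\epsilon)E_n''/n''$ and union-bounding the $\el_n$ violation events, so that a single deterministic signature set simultaneously meets the constraint and inherits the averaged error bound.
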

\begin{proof}
	\if	\ISIT 1
	The proof follows along similar lines as that of~\cite[Theorem~2]{ChenCG17}. For details, see  the extended version of this paper~\cite{RaviKISIT20}.
	\else
	The proof follows along similar lines as that of~\cite[Theorem~2]{ChenCG17}. For details, see Appendix~\ref{Sec_Lem_detct_proof}.	
	\fi
\end{proof}

We next use the following lemma to show that  $P\bigl(\cE_m(k'_n)\bigr)$ vanishes as $n \to \infty$ uniformly in  $k'_n \in \cK_n$, where $\cK_n \triangleq \{1, \ldots, \xi k_n\}$.
\begin{lemma}
	\label{Lem_err_expnt}
	Let  $A_{k'_n} \triangleq \frac{1}{k_n'} \sum_{i=1}^{k_n'} \I{ \hat{W}_i \neq W_i}$ and \mbox{$\cA_{k'_n} \triangleq \{1/k_n', \ldots,1 \}$}, where $\I{\cdot}$ denotes the indicator function. Then	for any arbitrary $0<\rho \leq 1$, we have
	\begin{equation}
	\textnormal{Pr}\{A_{k'_n} = a\} \leq \left(\frac{1}{\mu}\right)^{2k'_n} {k_n' \choose a k_n'} M_n^{a k_n' \rho} e^{-nE_0(a, \rho)}, \quad a\in\cA_{k'_n}\label{Eq__random_prob_err}
	\end{equation}
	where
	\begin{align}
	E_0(a, \rho)  \triangleq  \frac{\rho}{2} \ln \left(1+\frac{a 2k_n' E_n'}{n'(\rho +1)N_0}\right) \label{Eq_random_expnt}
	\end{align}
	and
	\begin{align}
	\mu & \triangleq \int \I{ \|\bar{a}\|^2 \leq E_n'} \prod_{i=1}^{n} \tilde{q}(a_i) d \bar{a} \label{Eq_def_mu}
	\end{align}
	is a normalizing constant. In~\eqref{Eq_def_mu}, $\tilde{q}$ denotes the probability density function of a zero-mean Gaussian random variable with variance $E_n'/(2n')$.
\end{lemma}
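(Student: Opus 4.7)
The plan is a standard random-coding Gallager-type argument, adapted to the random-access setting. For each user $i = 1, \ldots, k'_n$ and each message $w_i \in \{1, \ldots, M_n\}$ I would draw codewords $\tilde{\bX}_i(w_i) \in \mathbb{R}^{n'}$ independently from the truncated Gaussian density $q(\bar{a}) = \tilde{q}(\bar{a})\I{\|\bar{a}\|^2 \leq E_n'}/\mu$, which deterministically enforces the energy constraint in~\eqref{Eq_energy_consrnt}. The key observation is the pointwise domination $q(\bar{a}) \leq \tilde{q}(\bar{a})/\mu$: in any nonnegative integrand that upper-bounds the error probability, I can replace $q$ by $\tilde{q}/\mu$, each substitution paying a factor $1/\mu \geq 1$ but converting a constrained integral into an unconstrained Gaussian one.

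Next I would decompose $\{A_{k'_n}=a\}$ according to which subset $S \subset \{1, \ldots, k'_n\}$ of size $|S|=ak'_n$ indexes the incorrectly decoded users; a union bound over the $\binom{k'_n}{ak'_n}$ choices of $S$ supplies the binomial factor in~\eqref{Eq__random_prob_err}. It then remains to bound $\textnormal{Pr}(\cE_S)$ for a single fixed $S$, where $\cE_S = \{\hat{W}_i \neq W_i \text{ for every } i \in S\}$. Since the Gaussian channel output depends on the codewords only through their sum, the codewords of users outside $S$ merely shift $\bY$ and will drop out of the Gaussian integral at the end by shift-invariance; the subproblem thus reduces to ML-decoding $(\tilde{\bX}_i(W_i))_{i \in S}$ from $\sum_{i \in S}\tilde{\bX}_i(W_i) + \bZ$.

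Applying the standard Gallager MAC bound with parameter $\rho \in (0, 1]$ yields
\begin{equation*}
\textnormal{Pr}(\cE_S) \leq M_n^{|S|\rho} \int \left[\int \prod_{i \in S} q(\bar{\bx}_i) \, p(\bar{\by} \mid \bar{\bx}_S)^{1/(1+\rho)} d\bar{\bx}_S\right]^{1+\rho} d\bar{\by},
\end{equation*}
which contributes the $M_n^{ak'_n\rho}$ factor. Replacing each of the $|S|$ truncated densities $q$ in the inner integral by $\tilde{q}/\mu$ extracts a factor $(1/\mu)^{|S|(1+\rho)} \leq (1/\mu)^{2k'_n}$ outside the integral (using $|S|\leq k'_n$, $1+\rho \leq 2$, and $1/\mu \geq 1$). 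The resulting unconstrained integral is the standard Gaussian Gallager computation: the aggregate input $\sum_{i\in S}\tilde{\bX}_i$ has per-coordinate variance $|S|E_n'/(2n') = ak'_n E_n'/(2n')$ and the noise has per-coordinate variance $N_0/2$, so the Gaussian integration produces the exponential factor $\exp(-n E_0(a,\rho))$ with $E_0(a,\rho)$ as in~\eqref{Eq_random_expnt}.

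The main obstacle is bookkeeping rather than conceptual: one has to verify that a single uniform factor of $(1/\mu)^{2k'_n}$ (independent of $a$ and $\rho$) suffices to absorb all truncation costs, and that conditioning on the codewords of the correctly-decoded users really does drop out of the inner Gaussian integral via shift-invariance. The remaining ingredients — the union bound over $S$, the standard Gallager MAC bound, and the closed-form Gaussian integration — are routine.
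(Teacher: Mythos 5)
Your proposal is correct and follows essentially the same route as the paper: Gallager's MAC random-coding bound with i.i.d.\ Gaussian inputs truncated to the energy ball, the pointwise domination $\bq(\bar a)\leq \tilde{\bq}(\bar a)/\mu$ to revert to unconstrained Gaussian integrals, and the observation that $a(1+\rho)\leq 2$ together with $\mu\leq 1$ absorbs the truncation cost into $(1/\mu)^{2k'_n}$. The only difference is bookkeeping: you let the correctly decoded users' densities integrate out exactly via shift-invariance (paying no $1/\mu$ for them and getting the slightly sharper $(1/\mu)^{ak'_n(1+\rho)}$), whereas the paper also bounds those densities by $\tilde{q}/\mu$ and pays an additional $(1/\mu)^{k'_n-ak'_n}$ --- both land within the stated $(1/\mu)^{2k'_n}$.
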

\begin{proof}
		The upper bound in~\eqref{Eq__random_prob_err}  without the factor $(1/\mu)^{2k'_n}$ can be obtained using random coding with i.i.d. Gaussian inputs~\cite[Theorem~2]{Gallager85}. However, 	while i.i.d. Gaussian codebooks satisfy the energy constraint on average (averaged over all codewords), there may be some codewords in the codebook that violate it. 
		We therefore need to adapt the proof of~\cite[Theorem~2]{Gallager85} as follows.	Let 
		\begin{align*}
		\tilde{\bq}(\bar{a}) & = \prod_{i=1}^{n} \tilde{q}(a_i), \quad \bar{a} =(a_1,\ldots,a_n).
		\end{align*}
	For codewords distributed according to $\tilde{\bq}(\cdot)$, the probability $\Pr(A_{k'_n}=a)$ can be upper-bounded as~\cite[Theorem~2]{Gallager85}
		 \begin{align}
		\Pr(A_{k'_n} = a) & \leq {k_n' \choose a k_n'} M_n^{a k_n' \rho}  \int \tilde{\bq}(\tilde{\bx}_{ak'_n+1}) \cdots \tilde{\bq}(\tilde{\bx}_{k'_n}) \; G ^{1+\rho} \; d\tilde{\bx}_{ak'_n+1} \cdots d\tilde{\bx}_{k'_n} \;d\tilde{\by}  \label{eq_a1}
		\end{align}
		where 
		\begin{align*}
		G & =  \int \tilde{\bq}(x_{1}) \cdots \tilde{\bq}(x_{ak'_n}) \left(  p(\tilde{\by} \mid \tilde{\bx}_{1},\cdots, \tilde{\bx}_{k'_n})\right) ^{1/1+\rho} d\tilde{\bx}_{1} \cdots d\tilde{\bx}_{ak'_n} \notag.
		\end{align*}	
		Using the fact that the channel is memoryless, the RHS of~\eqref{Eq__random_prob_err} without the factor $(1/\mu)^{2k'_n}$ follows from~\eqref{eq_a1}. The  case of $k'_n =2$ was analyzed  in~\cite[Eq.~(2.33)]{Gallager85}.

	Now suppose that all codewords are generated according to the distribution
		\begin{align*}
		\bq(\bar{a}) & = \frac{1}{\mu} \I{ \|\bar{a}\|^2 \leq E_n'} \tilde{\bq}(\bar{a}).
		\end{align*}
		Clearly, such codewords  satisfy the energy constraint $E'_n$ with probability one. Furthermore,
		\begin{align}
		\bq(\bar{a}) & \leq \frac{1}{\mu} \tilde{\bq}(\bar{a}). \label{Eq_prob_signt_uppr1}
		\end{align}
		By replacing $\tilde{\bq}(\cdot)$ in~\eqref{eq_a1} by $\bq(\cdot)$ and upper-bounding $\bq(\cdot)$ by~\eqref{Eq_prob_signt_uppr1}, we obtain that
		\begin{align}
		\textnormal{Pr}\{A_{k'_n} = a\} \leq 		\left(\frac{1}{\mu}\right)^{(1+\rho)(ak'_n)}\left(\frac{1}{\mu}\right)^{k'_n - ak'_n} {k_n' \choose a k_n'} M_n^{a k_n' \rho} e^{-nE_0(a, \rho)}, \quad a\in\cA_{k'_n}. \label{Eq_Prob_An_uppr}
		\end{align}
		From the definition of $\mu$, we have that $0 \leq \mu \leq 1$.
	Since we further have $\rho\leq 1$ and $a\leq 1$, it follows that 
$(1/\mu)^{(1+\rho)(ak'_n)} \leq 	(1/\mu)^{a k_n'+k_n'}$.
Consequently, \eqref{Eq__random_prob_err} follows from~\eqref{Eq_Prob_An_uppr}.
\end{proof}

Next we show that   $\left(\frac{1}{\mu}\right)^{2k'_n} \to 1$ as $n \to \infty$ uniformly in $k'_n \in \cK_n$. By the definition of $\mu$, we have
\begin{align}
\mu & = 1 - \Pr\left(\|\tilde{\bX}_1\|_2^2 \geq E_n'\right) \notag
\end{align}
so $(1/\mu)^{2 k'_n} \geq 1$.
Let us consider $\tilde{\bX}_0 \triangleq \frac{2 n'}{E_n'} \|\tilde{\bX}_1\|_2^2$. Then, 
\begin{align*}
\Pr\left(\|\tilde{\bX}_1\|_2^2 \geq E_n'\right) & = \Pr(\tilde{\bX}_0 \geq 2 n').
\end{align*}
Furthermore, $\tilde{\bX}_0$  has a central chi-square distribution with $n'$ degrees of freedom. So, from
the Chernoff bound we obtain that 
\begin{align*}
\Pr(\tilde{\bX}_0 \geq a) & \leq \frac{E(e^{t\tilde{\bX}_0})}{e^{ta}} \\
& = \frac{(1-2t)^{-n'/2}}{e^{ta}}
\end{align*}
for every $t > 0$. By choosing $a= 2 n'$ and $t= \frac{1}{4}$, this yields
\begin{align}
\Pr(\tilde{\bX}_0 \geq 2 n') & \leq \frac{ \left(\frac{1}{2}\right)^{-n'/2}}{ \exp(n'/2) } \notag \\
& =  \exp \left[-\frac{n'}{2} \tau \right] \notag
\end{align}
where $\tau \triangleq \left( 1 - \ln 2    \right)$ is strictly positive. Thus,
\begin{align}
1 & \leq \left(\frac{1}{\mu}\right)^{2k'_n} \notag \\
& \leq \left(\frac{1}{\mu}\right)^{2\xi k_n} \notag \\
&= (1-\Pr(\tilde{\bX}_0 \geq 2 n'))^{-(2\xi k_n)}\notag \\
& \leq \left(1 - \exp \left[-\frac{n'}{2} \tau \right]\right)^{-(2\xi k_n)}, \quad k'_n \in \cK_n.	 \label{Eq_mu_uppr2}
\end{align}
We have  that $k_n=o(n)$  and $n' = \Theta(n)$. 
Since for any two non-negative sequences $a_n$ and $b_n$ such that $a_n\to 0$ and $a_nb_n \to 0$ as $n \to \infty$, it holds that $(1-a_n)^{-b_n} \to 1$ as $n \to \infty$, we obtain that the RHS of~\eqref{Eq_mu_uppr2} tends to one as $n \to \infty$ uniformly in $k'_n\in \cK_n$. So there exists a positive constant $n_0$  that is independent of $k'_n$ and satisfies
\begin{align*}
\left(\frac{1}{\mu}\right)^{2k'_n} \leq 2, \quad k'_n \in \cK_n, n \geq n_0.
\end{align*}

The probability of error $P\bigl(\cE_m(k'_n)\bigr)$ can be written as
\begin{equation}
P\bigl(\cE_m(k'_n)\bigr) = \sum\limits_{a\in\cA_{k'_n}} \textnormal{Pr}\{A_{k'_n} = a\}.	\label{Eq_prob_err_def}
\end{equation}
So, from Lemma~\ref{Lem_err_expnt}, we obtain
\begin{align}
\textnormal{Pr}\{A_{k'_n} = a\} & \leq 2	{k_n' \choose a k_n'} M_n^{a k_n' \rho}  \exp[-n'E_0(a, \rho)] \notag\\
& \leq 2 \exp\left[ k_n'H_2(a) + a \rho k_n' \log M_n -   n'E_0(a, \rho) \right]\notag \\
& = 2 \exp \left[-E_n'f_{k'_n}(a, \rho)\right], \quad n \geq n_0
\end{align}
where 
\begin{align}
f_{k'_n}(a, \rho) \triangleq \frac{n'E_0(a, \rho)}{E_n'} - \frac{a \rho k_n' \log M_n}{E_n'} - \frac{k_n' H_2(a)}{E_n'}. \label{Eq_fn_def}
\end{align}
We next show that, for sufficiently large $n$, we have
\begin{align}
\textnormal{Pr}\{A_{k'_n} = a\} \leq 2 \exp \left[-E_n'f_{\xi k_n}(1/(\xi k_n), \rho)\right], \quad a\in\cA_{k'_n}, k'_n \in \cK_n. \label{Eq_err_upp_bnd}
\end{align}
To this end, we first note that using basic algebra, we obtain
\begin{align*}
 \frac{d f_{k'_n}(a, \rho)}{da} & \geq \rho k'_n \left[ \frac{1}{1+\frac{2k'_nE'_n}{n'(\rho+1)N_0}} \frac{1}{(1+\rho)N_0} - \frac{\CR}{(1-b)\log e}  \right]\\
 & \geq \rho \left[ \frac{1}{1+\frac{2 \xi k_nE'_n}{n'(\rho+1)N_0}} \frac{1}{(1+\rho)N_0} - \frac{\CR}{(1-b)\log e}  \right].
\end{align*}
This implies that
for any fixed value of $\rho$ and our choices of $E_n'$ and \mbox{$\CR = \frac{(1-b)\log e}{(1+\rho)N_0} - \delta$} (for some arbitrary $0<\delta<\frac{(1-b)\log e}{(1+\rho)N_0}$),
\begin{equation*}
\liminf_{n\to\infty} \min_{k'_n \in \cK_n}  \min_{a \in \cA_{k'_n}} \frac{d f_{k'_n}(a, \rho)}{da} > 0.
\end{equation*}
This follows from the fact that $\frac{k_n E_n'}{n'} \to 0$ as $n \to \infty$, which in turn follows from our choice of $E_n'$ and since  $k_n = o(n / \log n)$. 
So there exists a positive constant $n'_0$ that is independent of $k'_n$ and satisfies
\begin{equation*}
\min_{a \in \cA_{k'_n}}  f_{k'_n}(a, \rho) \geq  f_{ k'_n}(1/k'_n, \rho), \quad k'_n \in \cK_n, n \geq n'_0
\end{equation*}
Furthermore, from the definition of $f_{k'_n}(a, \rho) $ in~\eqref{Eq_fn_def}, it follows  that for $a = 1/k'_n$ and for a given $\rho$, $f_{ k'_n}(a, \rho)$ is decreasing in $k'_n$ since in this case the first two terms on the RHS of~\eqref{Eq_fn_def} are independent of $k'_n$ and the third term is increasing in $k'_n$.
Hence, we can further lower-bound
\begin{equation*}
\min_{a \in \cA_{k'_n}}  f_{k'_n}(a, \rho) \geq  f_{\xi k_n}(1/(\xi k_n), \rho), \quad k'_n \in \cK_n, n \geq n'_0.
\end{equation*}

Next we show that, for our choice of $E_n'$ and \mbox{$\CR$}, we have
\begin{equation}
\label{eq:lim_pos}
\liminf_{n \rightarrow \infty}  f_{\xi k_n}(1/(\xi k_n), \rho) >0.
\end{equation}

Let
\begin{IEEEeqnarray*}{rCl}
	i_n(1/(\xi k_n),\rho) & \triangleq & \frac{n' E_0(1/(\xi k_n),\rho)}{E_n'}\\
	j(\rho) & \triangleq & \frac{\rho \CR}{(1-b)\log e}\\
	h_n(1/(\xi k_n)) & \triangleq & \frac{\xi 	k_n H_2(1/(\xi k_n))}{E_n'}.
\end{IEEEeqnarray*}
Note that $\frac{h_n(1/(\xi k_n))}{j(\rho)}$ vanishes as $n \to \infty$ for our choice of $E_n'$.
Consequently,
\begin{IEEEeqnarray*}{lCl}
	\liminf_{n \rightarrow \infty} f_n(1/(\xi k_n), \rho) 
	& = & j(\rho)  \biggl\{\liminf_{n\to\infty} \frac{i_n(1/(\xi k_n),\rho)}{j(\rho)} - 1 \biggr\}.
\end{IEEEeqnarray*}
The term  $j( \rho) =\rho \CR/(1-b)\log e$ is bounded away from zero for our choice of $\CR$ and $\delta < \frac{(1-b)\log e}{(1+\rho)N_0}$.  Furthermore, since $E_n'/n' \to 0$, we get
\begin{equation*}
\lim_{n\to\infty} \frac{i_n(1/(\xi k_n),\rho)}{j(\rho)} = \frac{(1-b)\log e}{(1+\rho)N_0 \CR}
\end{equation*}
which is strictly larger than $1$ for our choice of $\CR$. So, \eqref{eq:lim_pos} follows. Consequently, there exist two positive constants $\gamma$ and $n''_0$ that are independent of $k'_n$ and satisfy $f_{k'_n}(a, \rho)  \geq \gamma $ for $a\in \cA_{k'_n}$, $k'_n\in \cK_n$, and  $n \geq n''_0$.
We conclude that  for $n \geq \max (n_0,n'_0,n''_0)$,
\begin{align}
\textnormal{Pr}\{A_{k'_n} = a\} \leq 2 e^{-E_n'\gamma}, \quad a \in \cA_{k'_n}, k'_n \in \cK_n. \label{Eq_type_uppr}
\end{align}
Since $|\cA_{k'_n}| = k_n'$, it follows from~\eqref{Eq_prob_err_def} and~\eqref{Eq_type_uppr} that
\begin{align}
P\bigl(\cE_m(k'_n)\bigr) \leq k_n' 2 e^{-E_n'\gamma}, \quad k'_n \in \cK_n, n \geq \max (n_0,n'_0,n''_0). \notag
\end{align}
Further upper-bounding $k'_n \leq \xi k_n$, this implies that
\begin{align}
\sum_{k'_n=1}^{\xi k_n}\Pr\{K'_n=k_n'\}P\bigl(\cE_m(k'_n)\bigr) & \leq \xi k_n 2 e^{-E_n'\gamma},  \quad n \geq \max (n_0,n'_0,n''_0).  \label{Eq_sum_prob_uppr}
\end{align}
Since $E_n' = (1-b)c_n \ln \el_n$ and  $k_n = O(\el_n)$, it follows that the RHS of~\eqref{Eq_sum_prob_uppr} tends to 0 as $n \to \infty$ for our choice of \mbox{$\CR = \frac{(1-b)\log e}{(1+\rho)N_0} - \delta$}.
Since $\rho,\delta,$ and $b$ are  arbitrary, any rate $\CR < \frac{\log e}{N_0}$ is thus achievable. This proves Part~\ref{Thm_achv_part}) of Theorem~\ref{Thm_random_JPE}.

Next we prove Part~\ref{Thm_conv_part}).
Let $\hat{W_i}$ denote the receiver's estimate of $W_i$, and
denote by $\bW$ and $\bhW$ the vectors $(W_1,\ldots, W_{\el_n})$ and $(\hat{W_1},\ldots,\hat{W}_{\el_n})$, respectively. 
The messages $W_1,\ldots, W_{\el_n}$ are independent, so it follows from~\eqref{Eq_messge_def} that
\begin{align*}
H(\bW) = \el_n H(\bW_1) = \el_n \left(H_2(\alpha_n) + \alpha_n \log M_n \right)
\end{align*} 
where $H_2(\cdot)$ denotes the binary entropy function.
Since $	H(\bW) = H(\bW|\bY)+I(\bW;\bY)$, we obtain
\begin{align}
\el_n \left(H_2(\alpha_n) + \alpha_n \log M_n \right) & =H(\bW|\bY)+I(\bW;\bY). \label{Eq_messge_entrpy}
\end{align}
To bound $H(\bW)$, we  use the upper bounds~\cite[Lemma~2]{ChenCG17}
\begin{align}
H(\bW|\bY) \leq & \log 4 + 4 P_{e}^{(n)}\big(k_n \log M_n + k_n  + \el_n H_2(\alpha_n) + \log M_n \big) \label{Eq_messg_cond_entrpy}
\end{align}
and~\cite[Lemma~1]{ChenCG17}  
\begin{align}
I(\bW;\bY)  \leq  \frac{n}{2} \log \left(1+\frac{ 2k_nE_n}{nf_{k'_n}}\right). \label{Eq_mutl_info_uppr}
\end{align}
Using~\eqref{Eq_messg_cond_entrpy} and~\eqref{Eq_mutl_info_uppr}  in~\eqref{Eq_messge_entrpy}, rearranging terms, and dividing by $k_nE_n$, yields
\begin{align}
\left(1-4 P_{e}^{(n)}(1+1/k_n)\right) \CR \leq & \frac{\log 4}{k_nE_n} +   \frac{H_2(\alpha_n)}{\alpha_n E_n} \! \left(4 P_{e}^{(n)} -1\right) \notag \\
& \quad   +  4 P_{e}^{(n)} (1/E_n + 1/k_n) +\frac{n}{2 k_nE_n} \log \left(1+\frac{ 2k_nE_n}{nN_0}\right)\! . \label{Eq_rate_joint_uppr}
\end{align}
We next show that if  $k_n \log \el_n = \omega(n)$, then the right-hand side (RHS) of~\eqref{Eq_rate_joint_uppr} tends to a non-positive value. To this end, we need the following lemma.
\begin{lemma}
	\label{Lem_energy_bound}
	If $\CR > 0$, then $P_{e}^{(n)}$ vanishes as $n\to\infty$ only if  \mbox{$E_n = \Omega(\log \el_n)$}.
\end{lemma}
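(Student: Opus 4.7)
My plan is to lower-bound $E_n$ by combining two arguments: a Fano-style analysis of the joint decoding problem and a genie-aided reduction to the non-random-access MAC of~\cite{RaviKISIT19}. Each alone yields a strictly weaker bound; combined they give $E_n=\Omega(\log \el_n)$.

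For the first (Fano) step, I would start from $H(\bW)=H(\bW\mid\bY)+I(\bW;\bY)$ with $H(\bW)=\el_n H_2(\alpha_n)+k_n\log M_n$, bound $H(\bW\mid\bY)$ using~\eqref{Eq_messg_cond_entrpy}, and bound $I(\bW;\bY)$ using~\eqref{Eq_mutl_info_uppr}, reproducing an inequality of the form~\eqref{Eq_rate_joint_uppr}. Dividing by $k_n E_n$, using $\log(1+x)\leq x\log e$, and taking $n\to\infty$ with $P_{e}^{(n)}\to 0$ yields
\begin{equation*}
\CR + \frac{H_2(\alpha_n)}{\alpha_n E_n} \leq \frac{\log e}{N_0} + o(1).
\end{equation*}
Since $\CR>0$ is bounded away from $0$ and $H_2(\alpha_n)/\alpha_n \geq -\log\alpha_n = \log(\el_n/k_n)$, this forces $E_n \geq c_1\log(\el_n/k_n)$ for some $c_1>0$.

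For the second (genie) step, I would reveal the activity pattern $A=\bigl(\I{W_1\neq 0},\ldots,\I{W_{\el_n}\neq 0}\bigr)$ to the receiver. The genie-aided error is at most $P_{e}^{(n)}$ and still vanishes, and a Markov-plus-concentration argument restricts attention to patterns with $|A|=\Theta(k_n)$ on which the conditional error vanishes. Conditionally, the problem becomes a non-random symmetric Gaussian MAC with $\Theta(k_n)$ users that inherits the rate per unit-energy $\CR$, so the converse from~\cite{RaviKISIT19} gives the complementary bound $E_n\geq c_2\log k_n$ for some $c_2>0$. Combining, and using $\max(a,b)\geq (a+b)/2$,
\begin{equation*}
E_n \geq \max\bigl(c_1 \log(\el_n/k_n),\, c_2 \log k_n\bigr) \geq \tfrac{1}{2}\min(c_1,c_2)\,\log \el_n,
\end{equation*}
so $E_n=\Omega(\log \el_n)$.

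The hardest step will be making the genie argument rigorous: since $P_{e}^{(n)}(A)$ is not uniform in $A$, Markov plus concentration of $|A|$ around its mean are needed to isolate a set of ``good'' patterns with non-vanishing probability on which the conditional error still vanishes, and the cited non-random converse must then be verified to apply cleanly to such a random-but-concentrated number of users. A minor technical point is that the residual $\log 4/(k_n E_n)$ term in~\eqref{Eq_rate_joint_uppr} is negligible as soon as either of the two lower bounds on $E_n$ is in place, since then $k_n E_n\to\infty$.
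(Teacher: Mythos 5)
Your proposal is correct in outline, but it takes a genuinely different route from the paper. The paper proves the lemma in one self-contained stroke: it partitions the full message space $\{0,1,\ldots,M_n\}^{\el_n}$ by Hamming weight, covers each weight class by clusters of at least $\el_n+1$ message vectors with pairwise Hamming distance at most $8$ (via a covering-code construction), and applies Birg\'e's inequality --- the pairwise relative entropies between output distributions are at most $256E_n/N_0$ because clustered message vectors differ in at most $8$ users --- to obtain the non-asymptotic bound $P_e^{(n)}\geq \bigl(1-\frac{256E_n/N_0+\log 2}{\log \el_n}\bigr)\bigl(1-\Pr(\cT_{\bw}^0)\bigr)$, from which $E_n=\Omega(\log\el_n)$ is immediate. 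You instead split $\log\el_n=\log(1/\alpha_n)+\log k_n$ and prove each piece separately: the activity-uncertainty piece by reading the paper's own inequality~\eqref{Eq_rate_joint_uppr} ``in reverse'' (using the $H_2(\alpha_n)/(\alpha_n E_n)$ term as a lower bound on $E_n$ rather than discarding it), and the message-uncertainty piece by a genie reduction to the non-random converse of~\cite{RaviKISIT19}. Both halves are sound; your approach is more modular and makes transparent where the two contributions to $\log\el_n$ come from, but it leans on the external lemma (whose own proof is the same Birg\'e argument, so you are implicitly using the same core tool for the $\log k_n$ half) and requires the Markov-plus-concentration bookkeeping to be carried out, including an $E_n=\Omega(1)$ fallback when $k_n$ is bounded so that the $\max(a,b)\geq(a+b)/2$ step is applied to nonnegative quantities. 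One simplification worth noting: your concern about the residual $\log 4/(k_nE_n)$ term is moot --- multiplying~\eqref{Eq_rate_joint_uppr} through by $E_n$ turns it into $\log 4/k_n=O(1)$, which is harmlessly absorbed since only the growth rate of $E_n$ matters, so no circularity arises.
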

\begin{proof}
	\if \ISIT 1
	The proof of this lemma follows along similar lines as that of~\cite[Lemma~2]{RaviKISIT19}. For details, see~\cite{RaviKISIT20}.
	\else	
	See Appendix~\ref{Append_prob_lemma}.
	\fi
\end{proof}

Part~\ref{Thm_conv_part}) of Theorem~\ref{Thm_random_JPE} follows now by contradiction. Indeed, let us assume that $k_n \log \el_n = \omega(n)$, $P_{e}^{(n)} \to 0$,  and $\CR >0$. Then, Lemma~\ref{Lem_energy_bound} together with the assumption that $k_n = \Omega(1)$ implies that  $k_nE_n = \omega(n)$. It follows that the last term on the RHS of~\eqref{Eq_rate_joint_uppr} tends to zero as $n \to \infty$.  The assumption  $k_n \log \el_n = \omega(n)$ in turn implies that $\el_n \to \infty$ as $n \to \infty$. So, by Lemma~\ref{Lem_energy_bound}, $E_n \to \infty$.
Together with the assumption that $k_n = \Omega(1)$, this implies that the first and third term on the RHS of~\eqref{Eq_rate_joint_uppr} vanish as $n \to \infty$. Finally, 
$\frac{H_2(\alpha_n)}{\alpha_n E_n}$ is a sequence of non-negative numbers and $(4 P_{e}^{(n)} -1) \to -1$ as $n \to \infty$, so the second term converges to a non-positive  value. Thus, we obtain that $\CR$ tends to a non-positive value as $n \to \infty$. 
This contradicts the assumption $\CR > 0$, so Part~\ref{Thm_conv_part}) of Theorem~\ref{Thm_random_JPE} follows.

\subsection{Proof of Theorem~\ref{Thm_ortho_accs}}
\label{Sec_proof_ortho_access}

To prove Part~\ref{Thm_ortho_accs_achv}), we present a scheme that is similar to the one given in~\cite{RaviKISIT19} for the non-random-access case.  Specifically, each user is assigned  $n/\el_n$ channel uses out of which the first one is used for sending a pilot signal  and the rest are used for sending the message. Out of the  available energy $E_n$,   $t E_n$ for some arbitrary $0 < t < 1$ is used for the pilot signal and $(1-t)E_n$ is used for sending the message. Let $\tilde{\bx}(w) $ denote the codeword of length $\frac{n}{\el_n}-1$ for sending message $w$. Then
user $i$ sends in his assigned slot the codeword 
\begin{align*}
\bx(w_i) = \left(\sqrt{t E_n}, \tilde{\bx}(w_i)\right).
\end{align*}
The receiver first detects from the pilot signal whether user $i$ is active or not. If the user is estimated as active, then it decodes the user's message.
Let  $P_i = \textnormal{Pr}\{\hat{W_i} \neq W_i\}$ denote the probability that user $i$'s message is decoded erroneously.
Since all users follow the same coding scheme, the probability of correct decoding is given by
\begin{align}
P_c^{(n)}  = \left(1-P_1\right)^{\el_n}. \label{Eq_ortho_corrct}
\end{align}
By employing the transmission scheme that was used to prove~\cite[Theorem~2]{RaviKISIT19}, we get an upper bound on the probability of error $P_1$ as follows.
\begin{lemma}
	\label{Lem_detect_uppr}
	For $n\geq n_0$ and sufficiently large $n_0$, the probability of error in decoding  user 1's message can be upper-bounded as: 
	\begin{align*}
	P_1 & \leq \frac{2}{n^2}.
	\end{align*}
\end{lemma}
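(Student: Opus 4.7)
The plan is to decompose the error event for user~1 according to its true activity state and the detector's decision, so that $P_1$ is bounded by the sum of a pilot-detection error probability and a message-decoding error probability conditional on correct activity detection. Formally, if $\cD_1$ denotes the event that the first-stage activity detector for user~1 errs, and $\cE_1$ denotes the event that the subsequent message decoder errs given that user~1 is correctly identified as active, then
\begin{equation*}
P_1 \;\leq\; \textnormal{Pr}(\cD_1) \;+\; \alpha_n\,\textnormal{Pr}(\cE_1).
\end{equation*}
I would then aim to show that each of the two terms is at most $1/n^2$ for $n$ large enough.

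For the pilot-detection term, the only symbol used for activity detection is $Y_{1,1}=X_{1,1}+Z_{1,1}$, where $X_{1,1}=\sqrt{tE_n}$ when user~1 is active and $X_{1,1}=0$ otherwise. I would use a simple threshold test at $\sqrt{tE_n}/2$; both the false-alarm probability and the missed-detection probability then equal $Q\bigl(\sqrt{tE_n/N_0}\bigr)$, which by the standard Chernoff/Gaussian-tail bound is at most $\exp\!\bigl(-tE_n/(2N_0)\bigr)$. Since $E_n$ is chosen so that $tE_n\geq 4N_0\ln n$ for $n$ large, this quantity is $\leq 1/n^2$.

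For the message-decoding term, conditional on correct activity detection the channel reduces to a single-user AWGN channel with $n/\el_n-1$ symbol uses and energy budget $(1-t)E_n$. I would invoke Gallager's random coding bound with an i.i.d.~Gaussian ensemble truncated to the energy shell, reusing the normalization argument of Lemma~\ref{Lem_err_expnt} to absorb the factor $1/\mu$ (which tends to $1$ since a single codeword's length-$(n/\el_n-1)$ norm concentrates around $(1-t)E_n$). With rate per unit-energy bounded away from $(\log e)/N_0$, the exponent $E_0(\rho)$ in Gallager's bound is strictly positive, and the resulting bound decays like $\exp(-\gamma' E_n)$ for some $\gamma'>0$. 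Under the hypothesis $\el_n=o(n/\log n)$ and the choice of $E_n$ matching that of \cite[Theorem~2]{RaviKISIT19}, this is again $\leq 1/n^2$ for $n$ sufficiently large.

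Combining the two bounds and using $\alpha_n\leq 1$ yields $P_1\leq 2/n^2$ for $n\geq n_0$. The main obstacle is the bookkeeping in the decoding step: one must verify that, despite the shorter effective blocklength $n/\el_n-1$, Gallager's exponent applied to a codebook of size $M_n$ with energy $(1-t)E_n$ remains bounded away from zero for any rate per unit-energy strictly below $(\log e)/N_0$, and that the truncation-to-the-shell normalization $1/\mu$ can be uniformly controlled exactly as in the proof of Lemma~\ref{Lem_err_expnt}; the rest is a Gaussian-tail and union-bound exercise.
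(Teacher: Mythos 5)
Your proposal is correct and follows essentially the same route as the paper: a threshold test at $\sqrt{tE_n}/2$ on the pilot symbol bounded by the Gaussian tail, plus Gallager's random-coding exponent on the length-$(n/\el_n-1)$ single-user channel, with the choice $E_n=c_n\ln n$, $c_n=\ln\bigl(\tfrac{n}{\el_n\ln n}\bigr)$ ensuring both $E_n\el_n/n\to 0$ (so the exponent stays positive) and $E_n=\omega(\ln n)$ (so each term is $\leq 1/n^2$). The only cosmetic difference is that the paper directly cites Gallager's point-to-point bound, which already accommodates the energy constraint, rather than redoing the truncated-ensemble $1/\mu$ normalization of Lemma~\ref{Lem_err_expnt}.
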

\begin{proof}
	\if \ISIT 1
	See~\cite{RaviKISIT20}.
	\else
	See Appendix~\ref{Sec_appnd_ortho}.
	\fi
\end{proof}

From Lemma~\ref{Lem_detect_uppr} and~\eqref{Eq_ortho_corrct},
\begin{align}
P_c^{(n)} &  \geq \left(1-\frac{2}{n^{2}}\right)^{\el_n} \notag \\
& \geq	\left(1-\frac{2}{n^{2}}\right)^{\frac{n}{\log n}} \notag
\end{align}
which tends to one as $n \to \infty$. Thus,  Part~\ref{Thm_ortho_accs_achv}) of Theorem~\ref{Thm_ortho_accs} follows.

To prove Part~\ref{Thm_ortho_accs_conv}), we first note  that we consider symmetric codes, i.e., the pair $(M_n,E_n)$ is the same for all users. However, each user may be assigned different numbers of channel uses. Let $n_i$ denote the number of channel uses assigned to  user $i$. For an orthogonal-access scheme, if $\el_n = \omega(n/ \log n)$, then there exists at least one user, say $i=1$, such that $n_i = o(\log n)$.  
Using that $H(W_1 |  W_1 \neq 0 ) = \log M_n$, it follows from Fano's inequality that
\begin{align}
\log M_n	& \leq 1+P_1 \log M_n + \frac{n_1}{2 }\log\left(1+\frac{ 2 E_n}{n_1N_0}\right).  \nonumber 
\end{align}
This implies that the rate per unit-energy $\CR=(\log M_n)/E_n$ for user 1 is upper-bounded by 
\begin{align}
\CR \leq \frac{ \frac{1}{E_n} + \frac{n_1}{2 E_n}\log\left(1+\frac{ 2E_n}{n_1N_0}\right)}{1 -P_1}.\label{Eq_R_avg}
\end{align}
Since $\el_n = \omega(n/ \log n)$, it follows from Lemma~\ref{Lem_energy_bound} that $P_{e}^{(n)}$ goes to zero only if 
\begin{align}
E_n = \Omega(\log n). \label{Eq_ortho_enrg_lowr}
\end{align}
In contrast, \eqref{Eq_R_avg}  implies  that $\CR>0$ only if 
$E_n = O(n_1)$. Since $n_1 = o(\log n)$, this further implies that 
\begin{align}
E_n = o(\log n). \label{Eq_ortho_enrg_uppr}
\end{align}
No sequence $\{E_n\}$ can satisfy both~\eqref{Eq_ortho_enrg_uppr} and~\eqref{Eq_ortho_enrg_lowr} simultaneously. We thus obtain that if $\el_n =\omega(n/ \log n)$, then the capacity per unit-energy is zero. This is Part~\ref{Thm_ortho_accs_conv}) of Theorem~\ref{Thm_ortho_accs}.

	\label{sec_average}
Many works in the literature on many-access channels, including \cite{Polyanskiy17, OrdentlichP17,VemNCC17,ZadikPT19,KowshikPISIT19,KowshikP19,KowshiKAFPISIT19,KowshiKAFP19}, consider a \emph{per-user probability of error}
\begin{equation}
\label{eq:Pe_A}
P_{e,A}^{(n)} \triangleq \frac{1}{\el_n} \sum_{i=1}^{\el_n} \textnormal{Pr}\{\hat{W_i} \neq W_i\}
\end{equation}
rather than the joint error probability \eqref{Eq_prob_err}. In the following, we briefly discuss the behavior of the capacity per unit-energy when the error probability is $P_{e,A}^{(n)}$, which in this paper we shall refer to as \emph{average probability of error (APE)}. To this end, we define an $(n,\{M_n^{(\cdot)}\},\{E_n^{(\cdot)}\}, \epsilon)$ code under APE with the same encoding and decoding functions defined in Section~\ref{Sec_model}, but with the probability of error \eqref{Eq_prob_err} replaced with \eqref{eq:Pe_A}. We denote the capacity per unit-energy under APE by $\CC^A$.

Under APE, if $\alpha_n \to 0$ as $n \to \infty$, then $\Pr\{W_i =0\} \to 1$ for all $i=1,\ldots,\el_n$.
Consequently, a code with $M_n =2$ and $E_n =0$ for all $n$ and a decoding function that always declares that all users are inactive achieves an APE that vanishes as $n \to \infty$.
This implies that $\CC^A = \infty$ for vanishing $\alpha_n$. 
In the following, we avoid this trivial case and assume that $\alpha_n$ is bounded away from zero.

For a Gaussian MnAC with APE and $\alpha_n=1$ (non-random-access case),  we showed in~\cite{RaviKIZS20} that if the number of users grows sublinear in $n$, then each user can achieve the single-user capacity per unit-energy, and if the order of growth is linear or superlinear, then the capacity per unit-energy is zero. Perhaps not surprisingly, the same result holds in the random-access case since, when $\alpha_n$ is bounded away from zero, $k_n$ is of the same order as $\el_n$.
\begin{theorem}
	\label{Thm_capac_PUPE}
	If $k_n = \Theta( \el_n)$ and $\alpha_n \to \alpha \in (0,1]$, then $\CC^A$ has the following behavior:
	\begin{enumerate}
		\item  If  $\el_n  = o(n)$, then $\CC^A = \frac{\log e}{N_0}$. Moreover, the capacity per unit-energy can be achieved by an orthogonal-access scheme where each user uses a codebook with orthogonal codewords. \label{Thm__avg_achv_part}
		\item If $\el_n  = \Omega(n)$, then $\CC^A =0$.	\label{Thm__avg_conv_part}
	\end{enumerate}
\end{theorem}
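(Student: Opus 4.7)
My plan splits the theorem into its two parts and uses a different technique for each.

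For Part~\ref{Thm__avg_achv_part}) (achievability under $\el_n = o(n)$), I would use an orthogonal-access scheme analogous to the one in the proof of Theorem~\ref{Thm_ortho_accs}, Part~\ref{Thm_ortho_accs_achv}). Assign user~$i$ a dedicated slot of $n_i = n/\el_n$ channel uses; since $\el_n = o(n)$, we have $n_i\to\infty$. Within its slot, user~$i$ sends a pilot symbol for activity detection followed by a codeword drawn from an orthogonal codebook (for example, pulse-position modulation) with $M_n$ words and energy $E_n$. Because the slots do not interfere, each user effectively faces a single-user AWGN subchannel, and by the single-user orthogonal-code analysis underlying Lemma~\ref{Lem_detect_uppr} the per-user probability of error vanishes whenever $\log M_n/E_n < (\log e)/N_0$ and $E_n \to \infty$. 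Since $P_{e,A}^{(n)}$ is the average of the per-user errors, it too vanishes, proving $\CC^{A}\geq (\log e)/N_0$. The matching converse $\CC^{A}\leq (\log e)/N_0$ is the single-user bound obtained by revealing all other codewords to the decoder.

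For Part~\ref{Thm__avg_conv_part}) (converse under $\el_n = \Omega(n)$), I would reduce to the non-random-access result of~\cite{RaviKIZS20}, which states that $\CC^{A}=0$ when the number of users is $\Omega(n)$. Assume, for contradiction, that some $\CR > 0$ is achievable in the random MnAC. For each $\epsilon \in (0,1)$, fix a symmetric code $(n,M_n,E_n,\epsilon)$ and condition on the random set $\mathcal{A}$ of active users. Because $\alpha_n \to \alpha > 0$ and $k_n = \Theta(\el_n) = \Omega(n)$, Hoeffding's inequality gives $|\mathcal{A}|\in [(1-\gamma)k_n,(1+\gamma)k_n]$ with probability $1-o(1)$. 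Applying Markov's inequality to $\bar{P}(\mathcal{A})\triangleq (1/\el_n)\sum_{i}\textnormal{Pr}\{\hat{W}_i\neq W_i\mid\mathcal{A}\}$, whose expectation equals $P_{e,A}^{(n)}\leq \epsilon$, yields $\bar{P}(\mathcal{A})\leq 2\epsilon$ with probability at least $1/2$. Intersecting these two events produces, for large $n$, a realization $\mathcal{A}^*$ with $|\mathcal{A}^*|=\Theta(n)$ and $\bar{P}(\mathcal{A}^*)\leq 2\epsilon$. Restricting the code to the active users in $\mathcal{A}^*$, and revealing $\mathcal{A}^*$ as side information to the decoder (which can only reduce the error), yields a symmetric non-random MnAC code with $|\mathcal{A}^*|$ users, the same $(M_n,E_n)$, and per-user error averaged over the active users at most $(\el_n/|\mathcal{A}^*|)\cdot 2\epsilon = O(\epsilon)$. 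Since $|\mathcal{A}^*|=\Omega(n)$, the non-random converse of~\cite{RaviKIZS20} forces $\CR = 0$, a contradiction, and hence $\CC^{A} = 0$.

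The main obstacle is the reduction in Part~\ref{Thm__avg_conv_part}): one must verify both that the Hoeffding-plus-Markov intersection is nonempty (which requires $k_n \to \infty$ and is ensured by $k_n=\Omega(n)$) and that the per-user error averaged over only the $|\mathcal{A}^*|$ active users remains $O(\epsilon)$ uniformly in $\epsilon$---a step that uses $\el_n/|\mathcal{A}^*|\to 1/\alpha = O(1)$. Part~\ref{Thm__avg_achv_part}) is comparatively routine: it consists of the per-slot single-user analysis underlying Lemma~\ref{Lem_detect_uppr} together with the single-user converse.
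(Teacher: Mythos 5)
Your Part~\ref{Thm__avg_achv_part}) follows the paper's route essentially verbatim: the same orthogonal-access scheme with $n/\el_n$-length slots, a pilot symbol for activity detection, a pulse-position (orthogonal) codebook inside each slot, and a single-user converse giving $\CC^A\le(\log e)/N_0$; nothing to add there beyond noting that the paper phrases the converse through $P_{e,A}^{(n)}\ge\alpha_n\Pr(\hat W_i\neq W_i\mid W_i\neq 0)$ rather than a genie, which is the same bound. Part~\ref{Thm__avg_conv_part}) is where you genuinely diverge. The paper works directly on the random-access channel: it applies Fano's inequality to each $W_i$ (whose entropy is $H_2(\alpha_n)+\alpha_n\log M_n$), averages over $i$, and upper-bounds $\frac{1}{\el_n}I(\bW;\bY)$ by $\frac{n}{2\el_n}\log\bigl(1+\frac{2k_nE_n}{nN_0}\bigr)$, obtaining a closed-form bound on the rate per unit-energy with denominator $\alpha_n-P_{e,A}^{(n)}$ that visibly vanishes once $E_n\to\infty$ and $\el_n=\Omega(n)$. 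You instead de-randomize: concentration of $|\mathcal A|$ around $k_n$, Markov on the conditional per-user error, and a reduction to the non-random converse of~\cite{RaviKIZS20} with the active set revealed. Your reduction is sound --- conditioned on $\mathcal A^*$ the active users' messages are i.i.d.\ uniform on $\{1,\ldots,M_n\}$, the inactive users contribute nothing to $\bY$, and the factor $\el_n/|\mathcal A^*|\to 1/\alpha$ keeps the error $O(\epsilon)$ --- and you correctly identify the two verification points it hinges on. What each approach buys: the paper's argument is self-contained, shorter, and does not need concentration (it would survive even if $k_n$ did not concentrate, since only the mean $k_n$ enters the mutual-information bound); your argument treats the non-random result as a black box and makes the intuition "a random MnAC with $\alpha$ bounded away from zero is essentially a non-random MnAC with $\Theta(\el_n)$ users" explicit. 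One caution common to both: the step "$P_{e,A}^{(n)}\to 0$ forces $E_n\to\infty$, hence the rate bound vanishes" really requires the standard diagonalization over $\epsilon$ (a single code sequence with error tending to zero and rate per unit-energy bounded away from zero), since for a fixed $\epsilon$ the energy need not diverge; the paper elides this too, so it is not a gap specific to your write-up, but in your reduction you should apply the Hoeffding--Markov selection along that diagonal sequence rather than separately for each $\epsilon$.
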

\begin{proof}
		To prove Part~\ref{Thm__avg_achv_part}), we first argue that $P_{e,A}^{(n)} \to 0$ only if  $E_n \to \infty$. Indeed, we have
	\begin{align*}
	P_{e,A}^{(n)} & \geq \min_{i} \text{Pr}\{\hat{W}_i\neq W_i\} \\
	& \geq \alpha_n \text{Pr}(\hat{W_i} \neq W_i | W_i \neq 0) \; \text{ for some } i.
	\end{align*}
	Since $\alpha_n\to\alpha > 0$, this implies that 
	$P_{e,A}^{(n)} $ vanishes only if $ \text{Pr}(\hat{W_i} \neq W_i | W_i \neq 0)$ vanishes.  We next note that \mbox{$\text{Pr}(\hat{W_i} \neq W_i | W_i \neq 0)$} is lower-bounded by the error probability of the Gaussian single-user channel. By following the arguments in the proof of \cite[Theorem~2, Part~1)]{RaviKIZS20}, we obtain that  $P_{e,A}^{(n)} \to 0$ only if $E_n \to \infty$, which also implies that $\CC^A \leq \frac{\log e}{N_0}$.
	
	We next show that any rate per unit-energy $\CR < \frac{\log e}{N_0}$ is achievable by an orthogonal-access scheme where each user uses an orthogonal codebook of blocklength $n/\el_n$. Out of these $n/\el_n$ channel uses, the first one is used for sending a pilot signal to convey whether the user is active or not, and the remaining channel uses are used to send the message. Specifically, to transmit message $w_i$, user $i$ sends in his assigned slot the codeword $\bx(w_i) = (x_1(w_1), \ldots, x_{n/\el_n }(w_i))$, which is given by 
	\begin{align*}
	x_{k}(w_i) = \begin{cases}
	\sqrt{t E_n}, & \text{ if } k=1  \\
	\sqrt{(1-t) E_n}, & \text{ if }  k=w_i+1\\
	0, & \text{ otherwise}.
	\end{cases}
	\end{align*}
	\if \ISIT 1
	From the pilot signal, the receiver first detects whether the user is active or not. As shown in the proof of Lemma~\ref{Lem_detect_uppr}, the detection error vanishes as $n \to \infty$. Using the upper bound on the decoding-error probability for an orthogonal code with $M$ codewords and rate per unit-energy $\CR$ given in~\cite[Lemma~3]{RaviK19}, we can then show that $P_i, i=1,\ldots, \el_n$ vanishes as $n$ tends to infinity. This implies that also $P_{e,A}^{(n)} $ vanishes as $n \to \infty$. More details can be found in~\cite{RaviKISIT20}.
	\else 
	From the pilot signal, the receiver first detects whether the user is active or not. As shown in the proof of Lemma~\ref{Lem_detect_uppr} that the detection error vanishes as $n \to \infty$.	
	Furthermore, the probability of error in decoding for an orthogonal code with $M$ codewords and rate per unit-energy $\CR$ for the AWGN channel is upper-bounded by~\cite[Lemma~3]{RaviK19}:
	\begin{align}
	P_e \leq
	\begin{cases}
	\exp\left\{- \frac{\ln M}{\CR}\left(\frac{\log e}{2 N_0} - \CR \right) \right\}, \text{ if } 0 < \CR \leq \frac{1}{4} \frac{\log e}{N_0}\\
	\exp\left\{- \frac{\ln M}{\CR}  \left(\sqrt{\frac{\log e}{N_0}} - \sqrt{ \CR}\right)^2\right\}, \text{ if } \frac{1}{4} \frac{\log e}{N_0}  \leq \CR \leq \frac{\log e}{N_0}.
	\end{cases}
	\label{Eq_ortho_prob_uppr}
	\end{align}
	It follows from~\eqref{Eq_ortho_prob_uppr} that if $\CR < \frac{\log e}{N_0}$ and $M \to \infty$ as $n \to \infty$, then $P_e$ tends to zero as $n \to \infty$. Since $\el_n =o(n)$, it follows that $M = n/\el_n -1$ tends to $\infty$, as $n \to \infty$. Thus, for any $\CR < \frac{\log e}{N_0}$, the  probability of error in decoding vanishes. Thus, we obtain that $P_i, i=1,\ldots, \el_n,$ vanishes as $n$ tends to infinity
	This implies that also $P_{e,A}^{(n)} $ vanishes as $n \to \infty$.
	\fi
	
	\if \ISIT 1
	The proof of Part~\ref{Thm__avg_conv_part}) follows from Fano's inequality and is similar to that of~\cite[Theorem~2, Part~2)]{RaviKIZS20}. Details can be found in~\cite{RaviKISIT20}.
	\else
	Now we prove Part~\ref{Thm__avg_conv_part}). Fano's inequality yields that $H(\hat{W}_i|W_i) \leq 1+ P_i\log M_n$. Since $H(W_i) = H_2(\alpha_n) + \alpha_n \log M_n$, we have
	\begin{equation*}
	H_2(\alpha_n) + \alpha_n \log M_n \leq 1+ P_i\log M_n+ I(W_i; \hat{W}_i)
	\end{equation*}
	for $i=1,\ldots, \el_n$. Averaging over all $i$'s then gives
	\begin{align}
	H_2(\alpha_n) + \alpha_n \log M_n &  \leq  1+  \frac{1}{\el_n} \sum_{i=1}^{\el_n} P_i\log M_n+ \frac{1}{\el_n}  I({\bf W}; {\bf \hat{W}}) \nonumber\\ 
	&  \leq   1+P_{e,A}^{(n)}\log M_n+ \frac{1}{\el_n} I(\bW; \bY) \nonumber\\
	&  \leq 1 +  P_{e,A}^{(n)} \log M_n+\frac{n}{2\el_n} \log \left(1+\frac{2 k_nE_n}{nN_0}\right). \label{Eq_avg_prob_uppr}
	\end{align}
	Here, the first inequality follows because the messages $W_i, i=1, \ldots, \el_n$ are independent and because conditioning reduces entropy, the second inequality follows from the definition of $P_{e,A}^{(n)}$ and the data processing inequality, and the third inequality follows by upper-bounding $I(\bW;\bY)$ by $\frac{n}{2} \log \bigl(1+\frac{2 k_nE_n}{nN_0}\bigr)$ \cite[Lemma~1]{ChenCG17}.
	
	Dividing both sides of \eqref{Eq_avg_prob_uppr} by $E_n$, and rearranging terms,
	yields an upper-bound on the rate per unit-energy $\CR^A$  as
	\begin{equation}
	\label{eq:Part2_Th1_end}
	\PR\leq \frac{ \frac{1 - H_2(\alpha_n)}{E_n} + \frac{n}{2 \el_nE_n}\log(1+\frac{ 2k_nE_n}{nN_0})}{\alpha_n -P_{e,A}^{(n)}}.
	\end{equation}
	As noted before, $P_{e,A}^{(n)} \to 0$ only if $E_n \to \infty$.
	It follows that  $\frac{1 - H_2(\alpha_n)}{E_n}$ vanishes as $n \to \infty$.
	Furthermore, together with the assumptions $\el_n=\Omega(n)$ and $\alpha_n \to \alpha>0$, $E_n\to\infty$ yields that $k_nE_n/n=\ell_nE_n/(\alpha_n E_n)$ tends to infinity as $n\to\infty$. This in turn implies that
	\begin{equation*}
	\frac{n}{2\ell_n E_n} \log\left(1+\frac{2 k_n E_n}{n N_0}\right)=\frac{n \alpha_n}{2 k_n E_n}\log\left(1+\frac{2k_n E_n}{n N_0}\right)
	\end{equation*}
	vanishes as $n\to \infty $. It thus follows from~\eqref{eq:Part2_Th1_end} that $\CR^A$ vanishes as $n\to\infty$, thereby proving  Part~\ref{Thm__avg_conv_part}) of Theorem~\ref{Thm_capac_PUPE}.
\end{proof}

\if \ISIT 0

\appendices
\section{Proof of Lemma~\ref{Lem_usr_detect}}
\label{Sec_Lem_detct_proof}

First let us consider the case of bounded $\el_n$.
In this case, one can employ a scheme where each user gets an exclusive channel use to convey whether it is active or not. For such a scheme, it is easy to show that (see the proof of Lemma~\ref{Lem_detect_uppr} in Appendix~\ref{Sec_appnd_ortho}) the probability of a detection error $P(\cD)$ is upper-bounded by
\begin{align*}
P(\cD) & \leq \el_n e^{-E_n'' t}
\end{align*}
for some $t > 0$. The energy $E_n'' = b c_n \ln \el_n$ used for detection  tends to infinity  since $c_n \to \infty$ as $n \to\infty$. 
Thus,  $P(\cD)$ tends to zero as $n \to \infty$.

Next we prove Lemma~\ref{Lem_usr_detect} for the case where $\el_n \to  \infty$ as $n \to \infty$. To this end, we closely follow the proof of~\cite[Theorem~2]{ChenCG17}, but with the power constraint replaced by an energy constraint.
Specifically, we analyze  $\Pr(\cD)$  for  the user-detection scheme given in~\cite{ChenCG17} where signatures are drawn i.i.d. according to a zero mean Gaussian distribution.
 Note that the proof in~\cite{ChenCG17} assumes that
\begin{align}
\lim\limits_{n \to \infty} \el_n e^{-\delta k_n} =0 \label{Eq_Guo_cond}
\end{align}
for all $\delta > 0$. However, as we shall show next, in our case this assumption is not necessary.

To show that all signatures satisfy the energy constraint, we follow the technique used in the proof of Lemma~\ref{Lem_err_expnt}. Similar to Lemma~\ref{Lem_err_expnt}, we denote by $\tilde{q}$  the probability density function of a zero-mean Gaussian random variable with variance $E_n''/(2n'')$. We further let
\begin{align*}
\tilde{\bq}(\bar{a}) & = \prod_{i=1}^{n} \tilde{q}(a_i), \quad \bar{a} =(a_1,\ldots,a_n)
\end{align*}
and
\begin{align*}
\bq(\bar{a}) & = \frac{1}{\mu} \I{ \|\bar{a}\|^2 \leq E_n''} \tilde{\bq}(\bar{a}) 
\end{align*}
where
\begin{align*}
\mu & = \int \I{ \|\bar{a}\|^2 \leq E_n''} \; \tilde{\bq}(\bar{a})  d \bar{a}
\end{align*}
is a normalizing constant.  
Clearly, any vector $\bS_i$ distributed according to $\bq(\cdot)$ satisfies the energy constraint $E''_n$ with probability one. 
For any index set $I \subseteq \{1,\ldots, \el_n\}$, let the matrices $\underline{\bS}_{I}$ and $ \tilde{\underline{\bS}}_{I}$ denote the set of signatures for the users in $I$ that are distributed respectively as
\begin{align*}
\underline{\bS}_{I} & \sim \prod_{i\in I} \bq(\bS_i)
\end{align*}
and
\begin{align*}
\tilde{\underline{\bS}}_{I} & \sim \prod_{i\in I} \tilde{\bq}(\bS_i).
\end{align*}
As noted in the proof of Lemma~\ref{Lem_err_expnt}, we have
\begin{align}
\bq(\bs_i) & \leq \frac{1}{\mu} \tilde{\bq}( \bs_i). \label{Eq_prob_signt_uppr}
\end{align}

To analyze the detection error probability, we first define the $\el_n$-length vector $\bD^a$ as 
\begin{align*}
\bD^a \triangleq ( \I{W_1\neq0}, \ldots, \I{W_{\el_n} \neq 0}).
\end{align*}
For $c_n$ given in~\eqref{Eq_energy_choice}, let
\begin{align*}
v_n \triangleq k_n(1 + c_n).
\end{align*}
Further let
\begin{align*}
\cB^n(v_n) \triangleq \{ \bd \in \{0,1 \}^{\el_n} : 1 \leq  |\bd| \leq v_n \}
\end{align*}
where $|\bd|$ denotes the number of $1$'s in $\bd$. We denote by $\bS^a$  the matrix of signatures of all users which are generated independently according to $\bq(\cdot)$, and 
we denote by $\mathbf{Y}^a$ the first $n''$ received symbols, based on which the receiver performs user detection. The receiver outputs the $\hat{\bd}$ given by
\begin{align}
\hat{\bd} = \mathrm{ arg\,min}_{ \bd \in \cB^n(v_n) } \| \bY^a - \bS^a \bd \| \label{Eq_decod_rule}
\end{align}
as a length-$\el_n$ vector indicating the set of active users. Then, the probability of a detection error $\Pr(\cD)$ is upper-bounded by
\begin{align}
\Pr(\cD) 
& \leq \Pr(|\bD^a| > v_n ) + \sum_{\bd \in \cB^n(v_n)} \Pr(\cE_d|\bD^a = \bd) \Pr(\bD^a = \bd) + \Pr(\cE_d| |\bD^a| = 0) \Pr(|\bD^a| = 0) \label{Eq_detect_err_uoor}
\end{align}
where $|\bD^a|$ denotes the number of $1$'s in $\bD^a$ and $\Pr(\cE_d|\bD^a = \bd)$ denotes the detection error probability for a given $\bD^a = \bd$.  Next we show that each term on the RHS of~\eqref{Eq_detect_err_uoor} vanishes as $n \to \infty$. 

Using the Chernoff bound for the binomial distribution, we have
\begin{align*}
\Pr(|\bD^a| > v_n)  & \leq \exp(-k_n c_n/3)
\end{align*}
which vanishes since $c_n \to \infty $ and $k_n = \Omega(1)$.

We continue with the term $\Pr(\cE_d|\bD^a = \bd)$. For a given $\bD^a = \bd$, let $\kappa_1$ and $\kappa_2$ denote the number of miss detections and false alarms, respectively, i.e.,
\begin{align*}
\kappa_1 &= |\{ j: d_j \neq 0, \hat{d}_j = 0 \} |\\
\kappa_2 &= |\{ j: d_j = 0, \hat{d}_j \neq 0 \} |
\end{align*}
where $d_j$ and $\hat{d}_j$ denote the $j$-th components of the corresponding vectors. An error happens only if either $\kappa_1$ or $\kappa_2$ or both are strictly positive. The number of users that are either active or are declared as active by the receiver satisfies $|\bd|+\kappa_2 = |\hat{\bd}|+\kappa_1$, so
\begin{align*}
|\bd|+ \kappa_2 & \leq v_n + \kappa_1
\end{align*}
since $|\hat{\bd}|$ is upper-bounded by $v_n$ by the decoding rule~\eqref{Eq_decod_rule}.
So, the pair $(\kappa_1, \kappa_2)$ belongs to the following set:
\begin{align}
\cW^{\el_n}_{\bd} = & \left\{(\kappa_1,\kappa_2): \kappa_1 \in \{0,1,\ldots, |\bd| \}, \kappa_2 \in \{0,1,\ldots,v_n\},   \kappa_1+\kappa_2 \geq 1, |\bd|+\kappa_2 \leq v_n + \kappa_1 \right\}. \label{Eq_decision_sets}
\end{align}
Let $\Pr(\cE_{\kappa_1,\kappa_2}|\bD^a = \bd)$ be the probability of having  exactly $\kappa_1$  miss detections  and $\kappa_2$ false alarms when $\bD^a = \bd$. 
For  given $ \bd$ and $\hat{\bd}$, let $A^* = \{j : d_j \neq 0 \}$ and $A = \{j : \hat{d}_j \neq 0 \}$. We further define $A_1 = A^* \setminus A$, $A_2 = A \setminus A^*$, and 
\begin{align*}
T_A & = \|\bY^a - \sum_{j \in A} \bS_j \|^2 -  \|\bY^a - \sum_{j \in A^*} \bS_j \|^2.
\end{align*}
Using the analysis that led to~\cite[eq.~(67)]{ChenCG17}, we obtain
\begin{align}
\Pr(\cE_{\kappa_1,\kappa_2}|\bD^a = \bd) & \leq \binom{|A^*|}{\kappa_1} \binom{\el_n}{\kappa_2} \mathrm{E}_{\underline{\bS}_{A^*}, \bY} \{ [\mathrm{E}_{\underline{\bS}_{A_2}}\{ \I{T_A \leq 0} |\underline{\bS}_{A^*}, \bY \}]^{\rho}|\} \notag \\
& \leq \binom{|A^*|}{\kappa_1} \binom{\el_n}{\kappa_2}  \left(\frac{1}{\mu} \right)^{\rho \kappa_2} \mathrm{E}_{\underline{\bS}_{A^*}, \bY} \{ [\mathrm{E}_{\underline{\tilde{\bS}}_{A_2}}\{ \I{T_A \leq 0} |\underline{\bS}_{A^*}, \bY \}]^{\rho}\} \notag\\
& \leq \binom{|A^*|}{\kappa_1} \binom{\el_n}{\kappa_2} \left(\frac{1}{\mu}\right)^{|A^*|} \left(\frac{1}{\mu} \right)^{\rho \kappa_2} \mathrm{E}_{\underline{\tilde{\bS}}_{A^*}, \bY} \{ [\mathrm{E}_{\underline{\tilde{\bS}}_{A_2}}\{ \I{T_A \leq 0} |\underline{\bS}_{A^*}, \bY \}]^{\rho} \} \label{Eq_detect_err_new_distr2}
\end{align}
where in the second inequality we used that 
\begin{align}
\bq( \underline{\bs}_{A_2}) \leq \left( \frac{1}{\mu}\right)^{\kappa_2}\prod_{i \in A_2} \tilde{\bq}(\bs_i ) \label{Eq_Q_uppr1}
\end{align}
and in the third inequality we used that  
\begin{align}
\bq(\underline{\bs}_{A^*}) \leq \left( \frac{1}{\mu}\right)^{|A^*|}\prod_{i \in A^*} \tilde{\bq}(\bs_i ). \label{Eq_Q_uppr2}
\end{align}
Here, \eqref{Eq_Q_uppr1} and~\eqref{Eq_Q_uppr2} follow from~\eqref{Eq_prob_signt_uppr}.

 For every $\rho \in [0,1]$ and $\lambda \geq 0$, we obtain 	 from~\cite[eq.~(78)]{ChenCG17} that
\begin{align}
\binom{|A^*|}{\kappa_1} \binom{\el_n}{\kappa_2} \mathrm{E}_{\underline{\tilde{\bS}}_{A^*}, \bY} \{ [\mathrm{E}_{\underline{\tilde{\bS}}_{A}}\{ \I{T_A \leq 0} \}]^{\rho} \} & \leq \exp[ -\tilde{E}_n g^n_{\lambda, \rho}(\kappa_1,\kappa_2, \bd) ] \label{Eq_detect_prob_old_distrb}
\end{align}
where 
\begin{align}
\tilde{E}_n  \triangleq & E_n''/2, \notag \\
g^n_{\lambda, \rho}(\kappa_1,\kappa_2, \bd)  \triangleq & -\frac{(1-\rho)n''}{2 \tilde{E}_n} \log (1+\lambda \kappa_2 \tilde{E}_n/n'') + \frac{n''}{2\tilde{E}_n} \log \left(1+ \lambda(1-\lambda \rho)\kappa_2 \tilde{E}_n/n'' + \lambda \rho (1-\lambda \rho) \kappa_1 \tilde{E}_n/n''\right) \notag \\
&  - \frac{|\bd|}{\tilde{E}_n} H_2\left(\frac{\kappa_1}{|\bd|}\right) - \frac{\rho \el_n}{\tilde{E}_n} H_2\left(\frac{\kappa_2}{\el_n}\right). \label{Eq_err_exp}
\end{align}
Thus, it follows from~\eqref{Eq_detect_err_new_distr2} and \eqref{Eq_detect_prob_old_distrb} that
\begin{align}
\Pr(\cE_{\kappa_1,\kappa_2}|\bD^a = \bd) & \leq \left(\frac{1}{\mu}\right)^{|A^*| + \rho\kappa_2}  \exp[ -\tilde{E}_n g^n_{\lambda, \rho}(\kappa_1,\kappa_2, \bd) ].\label{Eq_err_mu_uppr}
\end{align}

Next we show that the RHS of~\eqref{Eq_err_mu_uppr} vanishes as $n \to \infty$. To this end,  we first show that $\left(\frac{1}{\mu}\right)^{|A^*| + \rho\kappa_2} \to 1$ as $n \to \infty$ uniformly in $(\kappa_1, \kappa_2) \in \cW^{\el_n}_{\bd}$ and $\bd \in \cB^n(v_n) $. 
From the definition of $\mu$, we have
\begin{align*}
\mu & = 1 - \Pr\left(\|\tilde{\bS}_1\|_2^2 \geq E_n''\right).
\end{align*}
Further, by defining $\tilde{\bS}_0 \triangleq \frac{2 n''}{E_n''} \|\tilde{\bS}_1\|_2^2$ and  following the steps that led to~\eqref{Eq_mu_uppr2}, we obtain
\begin{align}
1 & \leq \left(\frac{1}{\mu}\right)^{|A^*| + \rho\kappa_2} \notag \\
& \leq \left(\frac{1}{\mu}\right)^{2 \el_n} \notag\\
& = (1-\Pr(\tilde{\bS}_0 \geq 2 n''))^{-2 \el_n}\notag \\
& \leq \left(1 - \exp \left[-\frac{n''}{2} \tau \right]\right)^{-2 \el_n} \label{Eq_mu_uppr}
\end{align}
where $\tau = (1 - \ln 2)$. Here, in the second inequality we used that $|A^*| \leq \el_n$ and $\rho \kappa_2 \leq \el_n$.
Since $k_n \log \el_n = o(n)$ and $k_n = \Omega(1)$, we have $\log \el_n =o(n)$. Furthermore, $n'' = \Theta(n)$. 
As noted before, for any two non-negative sequences $a_n$ and $b_n$ satisfying $a_n\to 0$ and $a_nb_n \to 0$ as $n \to \infty$, it holds that $(1-a_n)^{-b_n} \to 1$ as $n \to \infty$. So, we obtain that the RHS of~\eqref{Eq_mu_uppr} goes to one as $n \to \infty$ uniformly in $(\kappa_1, \kappa_2)\in \cW_{\bd}^{\el_n}$ and $\bd \in \cB^n(v_n)$. So there exists a positive constant $n_0$ that is independent of $\kappa_1$, $\kappa_2$, and $\bd$ and satisfies 
\begin{align}
\left( \frac{1}{\mu} \right)^{|A^*| + \rho\kappa_2} & \leq 2, \quad (\kappa_1, \kappa_2)\in \cW_{\bd}^{\el_n}, \bd \in \cB^n(v_n), n \geq n_0. \label{Eq_mu_lowr}
\end{align}

Next we show that there exist  constants $\gamma >0$ and $n'_0$ (independent of $\kappa_1$, $\kappa_2$, and $\bd$) as well as some $\rho$ and $\lambda$ such that 
\begin{align}
 \min_{\bd \in \cB^n(v_n)} \min_{(\kappa_1,\kappa_2) \in \cW^{\el_n}_{\bd}} g^n_{\lambda, \rho}(\kappa_1,\kappa_2, \bd)  \geq \gamma, \quad n \geq n'_0. \label{Eq_detect_err_exp}
\end{align}
This then implies that $\Pr(\cE_{\kappa_1,\kappa_2}| \bD^a=\ \bd)$ vanishes as $n \to\infty$ uniformly in $(\kappa_1,\kappa_2) \in \cW^{\el_n}_{\bd}$ and $\bd \in \cB^n(v_n)$. Indeed, if $\bd \in \cB^n(v_n)$, then $|\bd| \leq v_n$ which implies that $\kappa_1 \leq v_n$. Furthermore, since the decoder outputs a vector in $\cB^n(v_n)$, we also have $\kappa_2 \leq v_n$.  It thus follows from~\eqref{Eq_err_mu_uppr}, \eqref{Eq_mu_lowr}, and~\eqref{Eq_detect_err_exp} that 
\begin{align}
\Pr(\cE_d| \bD^a=\ \bd) &\leq 2  v_n^2 \exp[-\tilde{E}_n \gamma ] \notag \\
& = 2 \exp\left[ -\tilde{E}_n \left(\gamma - \frac{ 2 \ln v_n}{\tilde{E}_n}\right)\right], \quad \bd \in \cB^n(v_n), n \geq \max(n_0,n'_0). \label{Eq_detect_err_uppr}
\end{align}
Furthermore, by the definition of $v_n$ and $\tilde{E}_n$,
\begin{align}
\frac{2\ln v_n}{\tilde{E}_n} & = \frac{2 \log e \log ((1+c_n)k_n)}{\tilde{E}_n} \notag \\
& =  \frac{4 \log e \log (1+c_n)}{b c_n \log \el_n} +  \frac{ 4 \log e \log k_n}{b c_n \log \el_n} \label{Eq_sn_En2}
\end{align}
which tends to zero since $c_n \to \infty$ and $1 \leq k_n \leq \el_n$.  Consequently, the RHS of~\eqref{Eq_detect_err_uppr} vanishes as $n \to \infty$.

To obtain~\eqref{Eq_detect_err_exp}, we first note that 
\begin{align}
\min_{\bd \in \cB^n(v_n)} \min_{(\kappa_1,\kappa_2) \in \cW^{\el_n}_{\bd}} g^n_{\lambda, \rho}(\kappa_1,\kappa_2, \bd)  & = \min \{ \min_{\bd \in \cB^n(v_n)}  \min_{1 \leq \kappa_1 \leq v_n} g^n_{\lambda, \rho}(\kappa_1,0,\bd), \min_{\bd \in \cB^n(v_n)}  \min_{ 1 \leq \kappa_2 \leq v_n } g^n_{\lambda, \rho}(0,\kappa_2, \bd), \notag \\
& \qquad \qquad \min_{\bd \in \cB^n(v_n)}  \min_{ \substack{ 1 \leq \kappa_1 \leq v_n \\ 1 \leq \kappa_2 \leq v_n}} g^n_{\lambda, \rho}(\kappa_1,\kappa_2, \bd)  \}. \label{Eq_inf_gn}
\end{align}
Then we show that for $\lambda = 2/3$ and $\rho = 3/4$,
\begin{align}
\liminf_{n \rightarrow \infty} \min_{\bd \in \cB^n(v_n)} \min_{1 \leq \kappa_1 \leq v_n} g^n_{\lambda, \rho}(\kappa_1, 0,\bd) & > 0 \label{Eq_w1_lowr} \\
\liminf_{n \rightarrow \infty} \min_{\bd \in \cB^n(v_n)} \min_{ 1 \leq \kappa_2 \leq v_n } g^n_{\lambda, \rho}(0,\kappa_2,\bd) & > 0 \label{Eq_w2_lowr} \\
\liminf_{n \rightarrow \infty} \min_{\bd \in \cB^n(v_n)} \min_{ \substack{ 1 \leq \kappa_1 \leq v_n \\ 1 \leq \kappa_2 \leq v_n}} g^n_{\lambda, \rho}(\kappa_1,\kappa_2, \bd)  & > 0  \label{Eq_w1w2_lowr}
\end{align}
from which~\eqref{Eq_detect_err_exp} follows.

Indeed, for $0 \leq \lambda \rho \leq 1$, we have
\begin{align}
& 2 \log \left(1+ \lambda(1-\lambda \rho)\kappa_2 \tilde{E}_n/n'' + \lambda \rho (1-\lambda \rho) \kappa_1 \tilde{E}_n/n''\right) \notag \\
& \qquad \qquad \geq  \log \left(1+ \lambda(1-\lambda \rho)\kappa_2 \tilde{E}_n/n'' \right) +  \log \left(1+\lambda \rho (1-\lambda \rho) \kappa_1 \tilde{E}_n/n''\right). \label{Eq_log_lowr}
\end{align}
Using~\eqref{Eq_log_lowr} in the second term on the RHS of~\eqref{Eq_err_exp}, we obtain that
\begin{align}
g^n_{\lambda, \rho}(\kappa_1, \kappa_2,\bd) & \geq  a^n_{\lambda, \rho}(\kappa_1,\bd)  + b^{n}_{\lambda, \rho}(\kappa_2)
\label{Eq_gn_lowr}
\end{align}
where 
\begin{align*}
 a^n_{\lambda, \rho}(\kappa_1,\bd)  \triangleq \frac{n''}{4\tilde{E}_n} \log \left(1 + \lambda \rho (1-\lambda \rho) \kappa_1 \tilde{E}_n/n''\right) - \frac{|\bd|}{\tilde{E}_n} H_2\left(\frac{\kappa_1}{|\bd|}\right)
\end{align*}
and 
\begin{align*}
b^{n}_{\lambda, \rho}(\kappa_2) \triangleq \frac{n''}{4\tilde{E}_n} \log \left(1+ \lambda(1-\lambda \rho)\kappa_2 \tilde{E}_n/n'' \right) -\frac{(1-\rho)}{2 \tilde{E}_n} \log (1+\lambda \kappa_2 \tilde{E}_n/n'') - \frac{\rho \el_n}{\tilde{E}_n} H_2\left(\frac{\kappa_2}{\el_n}\right).
\end{align*}

We begin by proving~\eqref{Eq_w1_lowr}. We have
\begin{align}
g^n_{\lambda, \rho}(\kappa_1, 0,\bd) & \geq a^n_{\lambda, \rho}(\kappa_1,\bd) + b^{n}_{\lambda, \rho}(0) \notag \\
& \geq  a^n_{\lambda, \rho}(\kappa_1,\bd) \label{Eq_gn_w1_lowr}
\end{align}
by~\eqref{Eq_gn_lowr} and $b^{n}_{\lambda, \rho}(0)  =0$.
Consequently,
\begin{align*}
 \min_{\bd \in \cB^n(v_n)}  \min_{1 \leq \kappa_1 \leq v_n} g^n_{\lambda, \rho}(\kappa_1, 0,\bd) & \geq  \min_{\bd \in \cB^n(v_n)}  \min_{1 \leq \kappa_1 \leq v_n} a^n_{\lambda, \rho}(\kappa_1,\bd)
\end{align*}
so~\eqref{Eq_w1_lowr} follows by showing that
\begin{align}
\liminf_{n \rightarrow \infty}  \min_{\bd \in \cB^n(v_n)}  \min_{1 \leq \kappa_1 \leq v_n} a^n_{\lambda, \rho}(\kappa_1,\bd) > 0. \label{Eq_first_lowr}
\end{align}
 To this end, let
\begin{align*}
i_n(\kappa_1) & \triangleq \frac{n''}{4\tilde{E}_n} \log \left(1 + \lambda \rho (1-\lambda \rho) \kappa_1 \tilde{E}_n/n''\right) \\
j_n(\kappa_1,\bd) & \triangleq \frac{|\bd|}{\tilde{E}_n} H_2\left(\frac{\kappa_1}{|\bd|}\right)
\end{align*}
so that
\begin{align}
 a^n_{\lambda, \rho}(\kappa_1,\bd) =  i_n(\kappa_1)   \left(1- \frac{j_n(\kappa_1,\bd)}{i_n(\kappa_1)}\right). \label{Eq_an}
\end{align}
Note that
\begin{align}
i_n(\kappa_1) & = \frac{n''}{4\tilde{E}_n} \log \left(1 + \lambda \rho (1-\lambda \rho) \kappa_1 \tilde{E}_n/n''\right) \notag \\
& \geq  \frac{n''}{4\tilde{E}_n} \log \left(1 + \lambda \rho (1-\lambda \rho) \tilde{E}_n/n''\right), \quad 1 \leq \kappa_1 \leq v_n \label{Eq_in_lowr}
\end{align}
and
\begin{align}
\frac{j_n(\kappa_1,\bd)}{i_n(\kappa_1)} & = \frac{4 |\bd| H_2\left(\frac{\kappa_1}{|\bd|}\right)}{n''  \log \left(1 + \lambda \rho (1-\lambda \rho) \kappa_1 \tilde{E}_n/n''\right) } \notag \\
& = \frac{4 \kappa_1 \log (|\bd|/\kappa_1) + |\bd|(\kappa_1/|\bd| - 1)   \log (1 -\kappa_1/|\bd|)   }{n''  \log \left(1 + \lambda \rho (1-\lambda \rho) \kappa_1 \tilde{E}_n/n''\right) }  \label{Eq_jn_in_ratio}.
\end{align}
Next we upper-bound $(\kappa_1/|\bd| - 1)   \log (1 -\kappa_1/|\bd|) $. Indeed, consider the function $f(p) = p - (p-1) \ln (1-p)$,
which satisfies $f(0)=0$ and is monotonically increasing in $p$.
 So, $(p-1) \ln (1-p) \leq p$, $0\leq p \leq 1$ which for $ p=\kappa_1/|\bd|$ gives 
\begin{align}
(\kappa_1/|\bd| - 1)   \log (1 -\kappa_1/|\bd|) \leq (\log e) \kappa_1/|\bd|. \label{Eq_fn_uppr}
\end{align}
Using~\eqref{Eq_fn_uppr} in~\eqref{Eq_jn_in_ratio}, we obtain that
\begin{align}
\frac{j_n(\kappa_1,\bd)}{i_n(\kappa_1)}  
& \leq \frac{ 4 \log (|\bd|/\kappa_1) +  4 \log e}{n''   \log \left(1 + \lambda \rho (1-\lambda \rho)  \kappa_1 \tilde{E}_n/n''\right)/\kappa_1 } \notag \\
& \leq \frac{4 \log (|\bd|/\kappa_1) +4 \log e }{n''   \log \left(1 + \lambda \rho (1-\lambda \rho) v_n \tilde{E}_n/n''\right)/v_n} \notag  \\
& \leq \frac{4 \log v_n +  4 \log e }{\tilde{E}_n \frac{ \log \left(1 + \lambda \rho (1-\lambda \rho) v_n \tilde{E}_n/n''\right)}{v_n\tilde{E}_n/n''}} \label{Eq_ratio_uppr4}  
\end{align}
where the second inequality follows because  $\frac{\log (1+x)}{x}$ is monotonically decreasing in $x > 0$, and the subsequent inequality follows because $|\bd| \leq v_n$ and $1 \leq \kappa_1 \leq v_n$. Combining~\eqref{Eq_an}, \eqref{Eq_in_lowr}, and~\eqref{Eq_ratio_uppr4}, $ a^n_{\lambda, \rho}(\kappa_1,\bd)$ can thus be lower-bounded by
\begin{align}
 a^n_{\lambda, \rho}(\kappa_1,\bd) &\geq   \frac{n''}{4\tilde{E}_n} \log \left(1 + \lambda \rho (1-\lambda \rho) \tilde{E}_n/n''\right)\left(1- \frac{4 \log v_n + 4 \log e }{\tilde{E}_n \frac{ \log \left(1 + \lambda \rho (1-\lambda \rho) v_n \tilde{E}_n/n''\right)}{v_n\tilde{E}_n/n''}}  \right). \label{Eq_an_lowr}
\end{align}
Note that the RHS of~\eqref{Eq_an_lowr} is independent of $\kappa_1$ and $\bd$. Furthermore, the term 
\begin{align}
\frac{ v_n \tilde{E}_n}{n''} & = \frac{ (1+c_n) k_n b c_n \ln \el_n}{2 b n} \notag \\
& =  \frac{  c_n k_n   \ln \el_n + c_n^2 k_n   \ln \el_n}{2  n}  \label{Eq_En_sn_zero2} 
\end{align}
 tends to zero as $n \to \infty$ since $k_n \ln \el_n = o(n)$ and $c_n = \ln\left(\frac{n}{k_n\ln \el_n}\right)$. 
Furthermore, $\tilde{E}_n\to\infty$ and, as observed in~\eqref{Eq_sn_En2}, $\frac{\log v_n}{\tilde{E}_n} \to 0$ as $n \to \infty$. It follows that
\begin{align}
& \liminf_{n \to \infty}   \min_{\bd \in \cB^n(v_n)}   \min_{1 \leq \kappa_1 \leq v_n} a^n_{\lambda, \rho}(\kappa_1,\bd) \notag \\
& \qquad \qquad \geq \lim_{n \to \infty} \frac{n''}{4\tilde{E}_n} \log \left(1 + \lambda \rho (1-\lambda \rho) \tilde{E}_n/n''\right)   \lim_{n \to \infty} \left(1- \frac{ \log v_n +  \log e }{\tilde{E}_n \frac{ \log \left(1 + \lambda \rho (1-\lambda \rho) v_n \tilde{E}_n/n''\right)}{v_n\tilde{E}_n/n''}}  \right) \notag \\
& \qquad \qquad= \frac{(\log e) \;  \lambda \rho (1-\lambda \rho)}{4} \notag \\
&\qquad \qquad = \frac{\log e}{16} \notag
\end{align}
which implies~\eqref{Eq_w1_lowr}.

We next prove~\eqref{Eq_w2_lowr}. Since $a^{n}_{\lambda, \rho}(0,\bd)  =0 $, it follows that
\begin{align*}
 \min_{\bd \in \cB^n(v_n)}  \min_{1 \leq \kappa_2 \leq v_n} g^n_{\lambda, \rho}(0, \kappa_2,\bd) & \geq \min_{1 \leq \kappa_2 \leq v_n} b^n_{\lambda, \rho}(\kappa_2).
\end{align*}
Thus~\eqref{Eq_w2_lowr} follows by showing that
\begin{align}
\liminf_{n \rightarrow \infty} \min_{1 \leq \kappa_2 \leq v_n} b^n_{\lambda, \rho}(\kappa_2) > 0. \label{Eq_w2_lowr1}
\end{align}
To show~\eqref{Eq_w2_lowr1}, we define
\begin{align}
q_n(\kappa_2) & \triangleq \frac{n''}{4\tilde{E}_n} \log \left(1+ \lambda(1-\lambda \rho)\kappa_2 \tilde{E}_n/n'' \right) \label{Eq_qn} \\
r_n(\kappa_2) & \triangleq \frac{(1-\rho)}{2 \tilde{E}_n} \log (1+\lambda \kappa_2 \tilde{E}_n/n'')  \\
u_n(\kappa_2) & \triangleq \frac{\rho \el_n}{\tilde{E}_n} H_2\left(\frac{\kappa_2}{\el_n}\right). \label{Eq_sn}
\end{align}
Then, 
\begin{align}
b^n_{\lambda, \rho}(\kappa_1) & =  q_n(\kappa_2)  \left(1- \frac{r_n(\kappa_2)}{q_n(\kappa_2)} - \frac{u_n(\kappa_2)}{q_n(\kappa_2)}\right). \label{Eq_bn_lowr}
\end{align}
Note that
\begin{align}
q_n(\kappa_2) & =   \frac{n''}{4\tilde{E}_n} \log \left(1+ \lambda(1-\lambda \rho)\kappa_2 \tilde{E}_n/n'' \right) \notag \\
& \geq  \frac{n''}{4\tilde{E}_n} \log \left(1+ \lambda(1-\lambda \rho) \tilde{E}_n/n'' \right), \quad 1 \leq \kappa_2 \leq v_n. \label{Eq_qn_lowr}
\end{align}
Furthermore,
\begin{align}
\frac{r_n(\kappa_2)}{ q_n(\kappa_2)} &= \frac{\frac{(1-\rho)}{2 \tilde{E}_n} \log (1+\lambda \kappa_2 \tilde{E}_n/n'')}{\frac{n''}{4\tilde{E}_n} \log \left(1+ \lambda(1-\lambda \rho)\kappa_2 \tilde{E}_n/n'' \right)} \notag \\
& \leq  \frac{\frac{(1-\rho)}{2 \tilde{E}_n} \log (1+\lambda v_n \tilde{E}_n/n'')}{\frac{n''}{4\tilde{E}_n} \log \left(1+ \lambda(1-\lambda \rho) \tilde{E}_n/n'' \right)} \notag \\
& \leq  \frac{\frac{(1-\rho)v_n}{2 n'' } \frac{ \log (1+\lambda v_n \tilde{E}_n/n'')}{\tilde{E}_nv_n/n''}}{ \frac{ \log \left(1+ \lambda(1-\lambda \rho) \tilde{E}_n/n'' \right)}{4\tilde{E}_n/n''}} \label{Eq_rn_qn_ratio}
\end{align}
for $1 \leq \kappa_2 \leq v_n$.
The term 
\begin{align*}
\frac{v_n}{n''} & = \frac{ k_n(1+c_n)}{bn} 
\end{align*}
tends to zero since $k_n c_n = o(n)$ by the lemma's assumption that $k_n \log \el_n = o(n)$. This together with the fact that $v_n \tilde{E}_n/n'' \to 0$ as $n \to \infty$ (cf.~\eqref{Eq_En_sn_zero2}),  and hence  $\tilde{E}_n/n'' \to 0$, implies that the RHS of~\eqref{Eq_rn_qn_ratio} tends to zero as $n \to \infty$.
Finally,
\begin{align}
\frac{u_n(\kappa_2)}{ q_n(\kappa_2)}  & = \frac{4 \rho \el_n H_2\left(\frac{\kappa_2}{\el_n}\right)}{n'' \log \left(1+ \lambda(1-\lambda \rho)\kappa_2 \tilde{E}_n/n'' \right)} \notag \\
& = \frac{4 \rho \left[ \kappa_2 \log (\el_n/\kappa_2) + \el_n (\kappa_2/\el_n -1)   \log (1 -\kappa_2/\el_n) \right]  }{n'' \log \left(1+ \lambda(1-\lambda \rho)\kappa_2 \tilde{E}_n/n'' \right)}\notag\\
& \leq \frac{4 \rho \left[ \kappa_2 \log (\el_n/\kappa_2) + \kappa_2 \log e \right]  }{n'' \log \left(1+ \lambda(1-\lambda \rho)\kappa_2 \tilde{E}_n/n'' \right)} \notag\\
& \leq  \frac{4 \rho \left[  \log (\el_n/\kappa_2) +  \log e \right]  }{n'' \log \left(1+ \lambda(1-\lambda \rho)v_n \tilde{E}_n/n'' \right)/v_n} \notag \\
& \leq  \frac{4 \rho \left[  \log \el_n +  \log e \right]  }{\tilde{E}_n \frac{ \log \left(1+ \lambda(1-\lambda \rho)v_n \tilde{E}_n/n'' \right)}{v_n\tilde{E}_n/n''}} \label{Eq_sn_qn_uppr}
\end{align}
for $1 \leq \kappa_2 \leq v_n$, where the first inequality follows from~\eqref{Eq_fn_uppr}. 
Since  $\frac{\log \el_n}{ \tilde{E}_n} = \frac{\log \el_n}{c_n \ln \el_n} \to 0$ and $v_n\tilde{E}_n/n'' \to 0$ as $n \to \infty$, the RHS of~\eqref{Eq_sn_qn_uppr} tends to zero as $n \to \infty$. Thus, it follows from~\eqref{Eq_qn_lowr}, \eqref{Eq_rn_qn_ratio}, and \eqref{Eq_sn_qn_uppr} that
\begin{align}
 b^n_{\lambda, \rho}(\kappa_2) & \geq \frac{n''}{4\tilde{E}_n} \log \left(1+ \lambda(1-\lambda \rho) \tilde{E}_n/n'' \right)\left(1 - \frac{\frac{(1-\rho)v_n}{2 n'' } \frac{ \log (1+\lambda v_n \tilde{E}_n/n'')}{\tilde{E}_nv_n/n''}}{ \frac{ \log \left(1+ \lambda(1-\lambda \rho) \tilde{E}_n/n'' \right)}{4\tilde{E}_n/n''}}  - \frac{4 \rho \left[  \log \el_n +  \log e \right]  }{\tilde{E}_n \frac{ \log \left(1+ \lambda(1-\lambda \rho)v_n \tilde{E}_n/n'' \right)}{v_n\tilde{E}_n/n''}} \right). \label{Eq_bn_lowr_bnd}
\end{align}
The lower bound in~\eqref{Eq_bn_lowr_bnd} is independent of $\kappa_2$ and $\bd$. It thus follows that
\begin{align}
&\liminf_{n \rightarrow \infty}   \min_{1 \leq \kappa_2 \leq v_n} b^n_{\lambda, \rho}(\kappa_2) \notag \\
&  \geq \lim_{n \to \infty}   \frac{n''}{4\tilde{E}_n} \log \left(1+ \lambda(1-\lambda \rho) \tilde{E}_n/n'' \right) \lim_{n \to \infty} \! \left(1 - \frac{\frac{(1-\rho)v_n}{2 n'' } \frac{ \log (1+\lambda v_n \tilde{E}_n/n'')}{\tilde{E}_nv_n/n''}}{ \frac{ \log \left(1+ \lambda(1-\lambda \rho) \tilde{E}_n/n'' \right)}{4\tilde{E}_n/n''}}  - \frac{4 \rho \left[  \log \el_n +  \log e \right]  }{\tilde{E}_n \frac{ \log \left(1+ \lambda(1-\lambda \rho)v_n \tilde{E}_n/n'' \right)}{v_n\tilde{E}_n/n''}} \right) \! \label{Eq_lim_bn} \notag \\
&  =  \frac{(\log e)  \; \lambda(1-\lambda \rho)}{4} \notag \\
&  = \frac{\log e}{12}\notag
\end{align}
which implies~\eqref{Eq_w2_lowr}.

To prove~\eqref{Eq_w1w2_lowr},  we use~\eqref{Eq_gn_lowr}, \eqref{Eq_an_lowr}, and \eqref{Eq_bn_lowr_bnd} to lower-bound
\begin{align}
 g^n_{\lambda, \rho}(\kappa_1,\kappa_2, \bd)  & \geq  \frac{n''}{4\tilde{E}_n} \log \left(1 + \lambda \rho (1-\lambda \rho) \tilde{E}_n/n''\right)\left(1- \frac{ \log v_n +  \log e }{\tilde{E}_n \frac{ \log \left(1 + \lambda \rho (1-\lambda \rho) v_n \tilde{E}_n/n''\right)}{v_n\tilde{E}_n/n''}}  \right) \notag \\
 & + \frac{n''}{4\tilde{E}_n} \log \left(1+ \lambda(1-\lambda \rho) \tilde{E}_n/n'' \right)\left(1 - \frac{\frac{(1-\rho)v_n}{2 n'' } \frac{ \log (1+\lambda v_n \tilde{E}_n/n'')}{\tilde{E}_nv_n/n''}}{ \frac{ \log \left(1+ \lambda(1-\lambda \rho) \tilde{E}_n/n'' \right)}{4\tilde{E}_n/n''}}  - \frac{4 \rho \left[  \log \el_n +  \log e \right]  }{\tilde{E}_n \frac{ \log \left(1+ \lambda(1-\lambda \rho)v_n \tilde{E}_n/n'' \right)}{v_n\tilde{E}_n/n''}} \right) \notag
\end{align}
which is independent of $\kappa_1, \kappa_2$, and $\bd$. Consequently,
\begin{align}
& \liminf_{n \to \infty}  \min_{\bd \in \cB^n(v_n)}  \min_{ \substack{ 1 \leq \kappa_1 \leq v_n \\ 1 \leq \kappa_2 \leq v_n}} g^n_{\lambda, \rho}(\kappa_1,\kappa_2, \bd)  \notag \\
 & \geq \lim_{n \to \infty} \left\{ \frac{n''}{4\tilde{E}_n} \log \left(1 + \lambda \rho (1-\lambda \rho) \tilde{E}_n/n''\right)\left(1- \frac{ \log v_n +  \log e }{\tilde{E}_n \frac{ \log \left(1 + \lambda \rho (1-\lambda \rho) v_n \tilde{E}_n/n''\right)}{v_n\tilde{E}_n/n''}}  \right)\right\} \notag \\
& + \lim_{n \to \infty}  \left\{ \frac{n''}{4\tilde{E}_n} \log \left(1+ \lambda(1-\lambda \rho) \tilde{E}_n/n'' \right)\left(1 - \frac{\frac{(1-\rho)v_n}{2 n'' } \frac{ \log (1+\lambda v_n \tilde{E}_n/n'')}{\tilde{E}_nv_n/n''}}{ \frac{ \log \left(1+ \lambda(1-\lambda \rho) \tilde{E}_n/n'' \right)}{4\tilde{E}_n/n''}}  - \frac{4 \rho \left[  \log \el_n +  \log e \right]  }{\tilde{E}_n \frac{ \log \left(1+ \lambda(1-\lambda \rho)v_n \tilde{E}_n/n'' \right)}{v_n\tilde{E}_n/n''}} \right)\right\} \notag \\
& = \frac{\log e}{12} + \frac{\log e}{16} \notag
\end{align}
which implies~\eqref{Eq_w1w2_lowr}. This was the last step required to prove~\eqref{Eq_detect_err_exp}.

We finish the proof of Lemma 1 by analyzing the third term on the RHS of~\eqref{Eq_detect_err_uoor}, namely,   $\Pr(\cE_d| |\bD^a| = 0) \Pr(|\bD^a| = 0)$. This term is upper-bounded by
\begin{align*}
 \Pr(\cE_d| |\bD^a| = 0)\left((1-\alpha_n)^{\frac{1}{\alpha_n}}\right)^{k_n}
\end{align*}
 and vanishes if $k_n$ is unbounded. Next we show that this term also vanishes when  $k_n$ is bounded. When $|\bD^a| = 0$, an error occurs only if there are false alarms. For $\kappa_2$ false alarms, let  $\bar{\bS} \triangleq \sum_{j=1}^{\kappa_2} \bS_j$, and let $S'_i$ denote the $i$-th component of $\bar{\bS}$. From~\cite[eq.~(303)]{ChenCG17},  we obtain  the following upper bound on the probability that there are $\kappa_2$ false alarms when $|\bD^a| = 0$:
\begin{align*}
P(\cE_{\kappa_2}|  |\bd| =0) & \leq  \binom{\el_n}{\kappa_2} \mathrm{E}_{\underline{\bS}_{A_2}} \left\{ \Pr\left\{ \sum_{i=1}^{n''} Z_i S'_i\geq \frac{1}{2} \|\bar{\bS} \|^2 \right\} |\bar{\bS} \right\} \notag  \\
& \leq \left(\frac{1}{\mu}\right)^{\kappa_2}  \binom{\el_n}{\kappa_2} \mathrm{E}_{\tilde{\underline{\bS}}_{A_2}} \left\{ \Pr\left\{ \sum_{i=1}^{n''} Z_i S'_i \geq \frac{1}{2} \|\bar{\bS} \|^2 \right\} |\bar{\bS} \right\} 
\end{align*}
where in the last inequality, we used~\eqref{Eq_prob_signt_uppr}. By following the analysis that led to~\cite[eq.~(309)]{ChenCG17}, we obtain
\begin{align}
P(\cE_{\kappa_2}|  |\bd| =0)  & \leq  \left(\frac{1}{\mu}\right)^{\kappa_2}   \exp \left[ -\tilde{E}_n (q'_n(\kappa_2) - u_n'(\kappa_2) ) \right] \notag
\end{align}
where 
\begin{align}
q'_n(\kappa_2) &\triangleq  \frac{n''}{2\tilde{E}_n}\log \left( 1+ \frac{\kappa_2 \tilde{E}_n}{4 n''} \right) \notag
\end{align}
and 
\begin{align}
u'_n(\kappa_2) & \triangleq  \frac{ \el_n}{\tilde{E}_n} H_2\left(\frac{\kappa_2}{\el_n}\right). \notag
\end{align}
As before, we upper-bound $ \left(\frac{1}{\mu}\right)^{\kappa_2} \leq 2$ uniformly in $\kappa_2$  for $n \geq \tilde{n}_0$. Furthermore, we observe that the behaviours of $q'_n(\kappa_2)$ and $u'_n(\kappa_2)$ are similar to $q_n(\kappa_2)$ and $v_n(\kappa_2)$ given in~\eqref{Eq_qn} and in~\eqref{Eq_sn}, respectively. So by following the steps as before, we can show that
\begin{align}
\liminf_{n \to \infty} \min_{1 \leq \kappa_2 \leq v_n} q'_n(\kappa_2) > 0  \notag
\end{align}
and 
\begin{align}
\lim_{n \to \infty}   \min_{1 \leq \kappa_2 \leq v_n} \frac{u'_n(\kappa_2)}{ q'_n(\kappa_2)} = 0. \notag
\end{align}
It follows that there exist positive constants $\tau'$ and $ \tilde{n}'_0$ (independent of $\kappa_2$) such that,
\begin{align}
P(\cE_d |  |\bd| =0)  & = \sum_{\kappa_2=1}^{v_n}P(\cE_{\kappa_2} |  |\bd| =0)\\
& \leq 2 v_n \exp \left[ -\tilde{E}_n \tau' \right], \quad n\geq \max(\tilde{n}_0, \tilde{n}'_0). \notag
\end{align}
We have already shown that $ v_n^2 \exp [ -\tilde{E}_n \tau' ]$ vanishes as $n \to \infty$ (cf. \eqref{Eq_detect_err_uppr}--\eqref{Eq_sn_En2}), which implies that $ 2 v_n \exp [ -\tilde{E}_n \tau' ]$  vanishes as $n \to \infty$. It thus follows that $P(\cE_d |  |\bd| =0)$ tends to zero as $n \to \infty$. This was the last step required to prove Lemma~\ref{Lem_usr_detect}.

\section{Proof of Lemma~\ref{Lem_energy_bound}}
\label{Append_prob_lemma}

Let $\cW$ denote the set of  $(M_n+1)^{\el_n}$ messages  of all users. To prove Lemma~\ref{Lem_energy_bound}, we represent each $\bw \in \cW$   using an $\el_n$-length vector such that the $i^{\mathrm{th}}$ position of the vector is set to $j$ if user $i$ has message $j$. 
The Hamming distance $d_H$ between two messages $\bw=(w_1,\ldots,w_{\el_n})$ and $\bw'=(w'_1,\ldots,w'_{\el_n})$ is defined as the number of positions at which $\bw$ differs from $\bw'$, i.e.,
$d_H(\bw,\bw') := \left|\{i: w_i\neq w'_i \}\right|$.

We first group the set $\cW$ into $\el_n +1$ subgroups. 
Two messages $\bw, \bw' \in \cW$ belong to the same subgroup if they have the same number of zeros. We can observe that all the messages in a subgroup have the same probability since the probability of a message $\bw$ is determined by the number of zeros in it.

Let $\cT_{\bw}^{t} $ denote the set of all messages with $t$ non-zero entries, where $t=0, \ldots, \el_n$. 
Further let 
\begin{align}
\text{Pr}(\cT_{\bw}^{t} ) \triangleq  \text{Pr}(\bW \in \cT_{\bw}^{t} )\notag
\end{align}
which can be evaluted as
\begin{equation}
\text{Pr}(\cT_{\bw}^{t} ) = (1-\alpha_n)^{\el_n-t} \left( \frac{\alpha_n}{M_n}\right)^{t} |\cT_{\bw}^{t} |. \label{Eq_type_prob}
\end{equation}
We define
\begin{align}
P_e(\cT_{\bw}^{t} ) \triangleq \frac{1}{|\cT_{\bw}^{t} |} \sum_{w\in \cT_{\bw}^{t} } P_e(\bw) \label{Eq_type_err_prob}
\end{align}
where $P_e(\bw)$ denotes the probability of error in decoding the set of messages $\bw=(w_1,\ldots,w_{\el_n})$.
It follows that
\begin{align}
P_{e}^{(n)} & =  \sum_{w \in \cW}  \text{Pr}(\bW =\bw) P_e(\bw) \notag \\
& =  \sum_{t=0}^{\el_n} \sum_{w\in \cT_{\bw}^{t} } (1-\alpha_n)^{\el_n-t} \left( \frac{\alpha_n}{M_n}\right)^{t} |\cT_{\bw}^{t} | \frac{1}{|\cT_{\bw}^{t} |}  P_e(\bw)\notag \\
& =  \sum_{t=0}^{\el_n}  \text{Pr}( \cT_{\bw}^{t} ) \frac{1}{|\cT_{\bw}^{t} |} \sum_{w\in \cT_{\bw}^{t} } P_e(\bw) \notag \\
& =  \sum_{t=0}^{\el_n}  \text{Pr}(\cT_{\bw}^{t} ) P_e(\cT_{\bw}^{t} ) \notag \\
& \geq \sum_{t=1}^{\el_n}  \text{Pr}(\cT_{\bw}^{t} ) P_e(\cT_{\bw}^{t} ) \label{Eq_avg_prob_err3}
\end{align}
where we have used \eqref{Eq_type_prob} and the definition of $P_e(\mathcal{T}_{\mathbf{w}}^t)$ in \eqref{Eq_type_err_prob}.
To prove Lemma~\ref{Lem_energy_bound}, we next show that 
\begin{align}
P_e(\cT_{\bw}^{t} ) \geq  1  -   \frac{ 256 E_n/N_0+\log 2}{\log \el_n}, \quad t=1,\ldots, \el_n. \label{Eq_prob_typ_lowr}
\end{align}
To this end, we partition each $\cT_{\bw}^{t},t=1,\ldots, \el_n  $ into $D_t$ sets $\cS_d^t$.  For every $ t=1,\ldots, \el_n $, the partition that we obtain satisfies 
\begin{align}
\frac{1}{|\cS_d^t|} \sum_{\bw \in \cS_d^t} P_e(\bw) \geq 1  -   \frac{ 256 E_n/N_0+\log 2}{\log \el_n}. \label{Eq_avg_prob_lowr}
\end{align}
This then gives~\eqref{Eq_prob_typ_lowr} since
\begin{align}
P_e(\cT_{\bw}^{t} ) & = \sum_{d=1}^{D_t} \frac{|\cS_d^t|}{|\cT_{\bw}^{t} |} \frac{1}{|\cS_d^t|} \sum_{w \in \cS_d^t} P_e(\bw).
\end{align}

Before we continue by defining the sets $\mathcal{S}_d^t$, we note that
\begin{align}
M_n \geq  2 \label{Eq_seqMn_assum}
\end{align}
since $M_n=1$ would contradict the assumption that $\CR>0$.
We further assume that
\begin{align}
\el_n \geq 5. \label{Eq_seqln_assum}
\end{align}
This assumption comes without loss of generality 
since $\el_n \to \infty$ as $n \to \infty$ by the assumption  that $k_n \log \el_n = \omega(n)$ and $k_n = \Omega(1)$. 

We next define a partition of $\cT_{\bw}^{t}, t=1,\ldots, \el_n  $  that satisfies the following:
\begin{align}
|\cS_d^t| \geq \el_n +1, \quad  d=1, \ldots, D_t \label{Eq_set_lowr}
\end{align}
and
\begin{align}
d_{H}(\bw,\bw') \leq 8, \quad \bw, \bw \in  \cS_d^t. \label{Eq_distnce_uppr}
\end{align}
To this end, we consider the following four cases:

\underline{ Case 1: $t=1$:} For $t=1$, we do not partition the set, i.e., $\cS_1^1 = \cT_ {\bw}^{1}$. Thus, we have $|\cS_1^1| = \el_nM_n$. From~\eqref{Eq_seqMn_assum} and~\eqref{Eq_seqln_assum}, it follows that $|\cS_1^1| \geq \el_n +1$. Since any two messages $\bw, \bw' \in  \cT_ {\bw}^{1}$ 
have only one non-zero entry, we further have that $d_{H}(\bw,\bw') \leq 2$. Consequently, \eqref{Eq_set_lowr} and~\eqref{Eq_distnce_uppr} are satisfied.

\underline{ Case 2: $t=2,\ldots,\el_n -2$:}
In this case, we obtain a partition by finding a code $\cC_t$ in $\cT_ {\bw}^{t}$  that has minimum Hamming distance $5$ and for every $\bw\in \cT_ {\bw}^{t}$ there exists at least one codeword in $\cC_t$ which is at most at a Hamming distance 4 from it. 
Such a code exists because if for some $\bw\in \cT_ {\bw}^{t}$ all codewords were at a Hamming distance 5 or more, then we could add $\bw$ to $\cC_t$ without affecting its minimum distance. 
Thus for all $\bw \notin \cC_t$, there exists at least one index $j$ such that $d_H(\bw,\bc_t(j)) \leq 4$, where
$\bc_t(1),\ldots, \bc_t(|\cC_t|)$ denote the codewords of code $\cC_t$. With this code $\mathcal{C}_t$, we partition $\mathcal{T}_{\mathbf{w}}^t$ into the sets $\mathcal{S}_d^t$, $d=1,\ldots,D_t$ using the following procedure:

For a given $d=1, \ldots,D_t$, we assign $\bc_t(d)$ to $\cS_d^t$ as well as all $\bw \in \cT_ {\bw}^{t}$ that satisfy $d_H(\bw, \bc_t(d))\leq 2$. These assignments are unique since the code $\cC_t$ has minimum Hamming distance 5. We next consider  all $\bw\in  \cT_ {\bw}^{t}$ for which there is no codeword $\bc_t(1), \ldots, \bc_t(|\cC_t|)$ satisfying $d_H(\bw, \bc_t(d))\leq  2$ and assign it  to the set with index $d = \min \{j=1,\ldots, D_t: d_H(\bw, \bc_t(j)) \leq 4 \}$. Like this, we obtain a partition of $ \cT_ {\bw}^{t}$.  
Since any two $\bw, \bw' \in  \cS_d^t$ are at most at a Hamming distance 4 from the codeword $\bc_t(d)$, we have that
$d_{H}(\bw,\bw') \leq 8$. Consequently, \eqref{Eq_distnce_uppr} is satisfied.

To show that~\eqref{Eq_set_lowr} is satisfied, too, we use the following fact:
\begin{align}
\text{For two natural numbers } a \text{ and } b, \text { if } a \geq 4 \text{ and } 2\leq b \leq a-2, \text{ then } b(a-b) \geq a. \label{Eq_prod_seq}
\end{align}
This fact follows since $b(a-b)$ is increasing from $b=1$ to $b = \lfloor a/2\rfloor$ and is 
decreasing from $b = \lfloor a/2\rfloor$ to $b=a-1$. So $b(a-b)$ is minimized at $b=2$ and $b=a-2$, where it has the value $2a-4$. For $a\geq 4$, this value is greater than or equal to $a$, hence the claim follows.

From~\eqref{Eq_prod_seq}, it follows that if   $|\cS_d^t| \geq 1+ t(\el_n - t)$, then $|\cS_d^t|\geq 1+ \el_n$. It thus remains to show that  $|\cS_d^t| \geq 1+  t(\el_n - t)$.
To this end, for every codeword $\bc_t(d)$, consider all sequences in $\cT_ {\bw}^{i}$ which differ exactly in one non-zero position and in one zero position from $\bc_t(d)$. There are $ t(\el_n - t)M_n$ such  sequences  in $\cT_ {\bw}^{t}$, so we get
\begin{align}
t(\el_n - t)M_n & \geq  t(\el_n - t) \notag \\
& \geq \el_n  \label{Eq_set_lowr2}
\end{align}
by~\eqref{Eq_seqMn_assum}, \eqref{Eq_seqln_assum}, and~\eqref{Eq_prod_seq}.
Since the codeword $\bc_t(d)$ also belongs to $S_d^t$, it follows from~\eqref{Eq_set_lowr2} that 
\begin{align}
| \cS_d^t| &\geq \el_n +1. \notag
\end{align}

\underline{ Case 3: $t=\el_n -1$:}
 We obtain a partition by defining a code $\cC_t$ in $ \cT_ {\bw}^{\el_n -1}$ 
that has 
the same properties as the code used for Case 2. We then use the same procedure as in Case 2 to assign messages in $\bw \in\cT_ {\bw}^{\el_n -1}$ to the sets $\cS_d^t$, $d=1,\ldots,D_t$. This gives a partition of $\cT_ {\bw}^{\el_n -1}$ where any two $\bw, \bw'\in\cS_d^t$ satisfy $d_H(\mathbf{w},\mathbf{w}')\leq 8$. Consequently, this partition satisfies~\eqref{Eq_distnce_uppr}. 

We next show that this partition also satisfies~\eqref{Eq_set_lowr}. To this end, for every codeword $\bc_t(d)$, consider all the sequences which differ exactly in two non-zero positions from $\bc_t(d)$.
There are $ \binom{\el_n-1}{2} (M_n-1)^2$  such sequences in $ \cT_ {\bw}^{\el_n -1}$. Since $\cS_d^t$ also contains the codeword $\bc_t(d)$, we obtain that
\begin{align*}
| \cS_d^t|  & \geq \notag
\binom{\el_n-1}{2} (M_n-1)^2 + 1\\
& \geq \binom{\el_n-1}{2} +1   \\
& \geq  \el_n +1 
\end{align*}  
 by~\eqref{Eq_seqMn_assum} and~\eqref{Eq_seqln_assum}.

\underline{ Case 4: $t=\el_n$:} We obtain a partition by defining a code $\mathcal{C}_t$ in $\mathcal{T}_{\mathbf{w}}^{\el_n-1}$ that has the same properties as the code used in Case 2. We then use the same procedure as in Case 2 to assign messages in $\mathbf{w}\in\mathcal{T}_{\mathbf{w}}^t$ to the sets $\mathcal{S}_d^t$, $d=1,\ldots,D_t$. This gives a partition of $\mathcal{T}_{\mathbf{w}}^t$ where any two $\mathbf{w}, \mathbf{w}'\in\mathcal{S}_d^t$ satisfy $d_H(\mathbf{w},\mathbf{w}')\leq 8$. Consequently, this partition satisfies~\eqref{Eq_distnce_uppr}.

We next show that this partition also satisfies~\eqref{Eq_set_lowr}. To this end, for every codeword $\bc_t(d)$, consider all sequences which are  at Hamming distance $1$ from $\bc_t(d)$. There are $\el_n(M_n-1)$ such sequences. Since $\cS_d^t$ also contains the codeword, we have 
\begin{align}
| \cS_d^t| & \geq 1+ \el_n(M_n-1) \notag \\
& \geq 1+\el_n \notag
\end{align}  
by~\eqref{Eq_seqMn_assum}. 

Having obtained a partition of $\mathcal{T}_{\mathbf{w}}^t$ that satisfies~\eqref{Eq_set_lowr} and~\eqref{Eq_distnce_uppr}, we next derive the lower bound~\eqref{Eq_avg_prob_lowr}.	To this end, we use  a stronger form of Fano's inequality known as Birg\'e's inequality.
\begin{lemma}[Birg\'e's inequality]
	\label{Lem_Berge}
	Let $(\cY, \cB)$ be a measurable space with a $\sigma$-field, and let $P_1,\ldots, P_N$  be probability measures defined on $\cB$. Further let $\cA_i$, $i=1, \ldots,N$  denote $N$ events defined on $\cY$, where $N\geq 2$.
	Then 
	\begin{align*}
	\frac{1}{N} \sum_{i=1}^{N} P_i(\cA_i) \leq \frac{\frac{1}{N^2} \sum_{i,j}^{}  D(P_i\|P_j)+\log 2}{\log (N-1)}.
	\end{align*}
\end{lemma}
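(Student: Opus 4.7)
The plan is to reduce Birg\'e's inequality to a Fano-type argument. I would introduce the canonical probability model: let $V$ be uniformly distributed on $\{1,\ldots,N\}$, and, conditionally on $V=i$, let $Y$ be drawn from $P_i$. Define a deterministic estimator $\hat{V}(Y)$ via the events $\cA_i$ by setting $\hat{V}(Y)=i$ whenever $Y\in\cA_i$. This construction assumes the events are disjoint, which is indeed the case in the application above, where the $\cA_i$ play the role of decoder decision regions; otherwise one may replace each $\cA_i$ by $\cA_i\setminus\bigcup_{j<i}\cA_j$, which only decreases the left-hand side while leaving the right-hand side unchanged. With this choice, $\Pr\{\hat{V}=V\} = \frac{1}{N}\sum_{i=1}^N P_i(\cA_i)$.

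Next I would apply Fano's inequality to the pair $(V,\hat{V})$. Letting $P_e = 1 - \Pr\{\hat{V}=V\}$, one has $H(V\mid\hat{V}) \leq H_2(P_e) + P_e\log(N-1) \leq \log 2 + P_e\log(N-1)$, so together with $H(V)=\log N$ we obtain
\begin{equation*}
I(V;\hat{V}) \geq \log N - \log 2 - P_e\log(N-1).
\end{equation*}
Since $\hat{V}$ is a deterministic function of $Y$, the data-processing inequality yields $I(V;Y)\geq I(V;\hat{V})$.

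The final ingredient is to upper-bound $I(V;Y)$ by the averaged pairwise KL divergence. Setting $\bar{P}=\frac{1}{N}\sum_{j=1}^N P_j$, the definition of mutual information gives $I(V;Y)=\frac{1}{N}\sum_i D(P_i\|\bar{P})$. The map $Q\mapsto \frac{1}{N}\sum_i D(P_i\|Q)$ is minimized over $Q$ at $\bar{P}$, as can be checked by expanding each divergence and observing that the $Q$-dependent part equals $H(\bar{P})+D(\bar{P}\|Q)$. Substituting $Q=P_j$ and averaging over $j$ therefore gives
\begin{equation*}
I(V;Y) = \frac{1}{N}\sum_{i=1}^N D(P_i\|\bar{P}) \leq \frac{1}{N^2}\sum_{i,j=1}^N D(P_i\|P_j).
\end{equation*}
Chaining the three displays above and solving for $1-P_e = \frac{1}{N}\sum_i P_i(\cA_i)$ yields exactly the claimed inequality.

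The main obstacle is essentially bookkeeping: one must justify the implicit disjointness reduction on the events $\cA_i$ (without which the inequality fails, e.g., for $N\geq 4$ with identical $P_i$ and $\cA_i$ equal to the whole space) and keep track of the $\log(N-1)$ denominator, which degenerates when $N=2$ (making the bound vacuous, as the right-hand side is $+\infty$). Beyond these technical points, the argument is a routine combination of Fano's inequality, the data-processing inequality, and the variational characterization of the barycenter distribution $\bar{P}$ as the minimizer of the average KL divergence.
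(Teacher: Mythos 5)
The paper does not actually prove this lemma---it simply cites Yatracos (1988)---so your self-contained derivation is additional content rather than a parallel route. Your argument is the standard one underlying that reference and it is correct for \emph{disjoint} events $\cA_i$: the Fano step, the data-processing step, and the bound $\frac{1}{N}\sum_i D(P_i\|\bar{P}) \leq \frac{1}{N^2}\sum_{i,j} D(P_i\|P_j)$ via the compensation identity $\frac{1}{N}\sum_i D(P_i\|Q) = \frac{1}{N}\sum_i D(P_i\|\bar{P}) + D(\bar{P}\|Q)$ all check out, and the final chaining uses $\log(N-1)\le\log N$ to absorb the entropy terms into $\log 2$. Your observation that the lemma as stated is false without a disjointness hypothesis is also correct (identical $P_i$ with $\cA_i=\cY$ gives left-hand side $1$ and right-hand side $\log 2/\log(N-1)<1$ for $N\ge 4$); in the paper's application the $\cA_j$ are decoder decision regions and hence disjoint, so no harm is done there, but the lemma statement should carry the hypothesis.

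One logical slip worth fixing: your proposed reduction to the disjoint case---replacing $\cA_i$ by $\cA_i\setminus\bigcup_{j<i}\cA_j$---goes in the wrong direction. That replacement \emph{decreases} the left-hand side, so proving the inequality for the disjointified events does not imply it for the original ones; your own counterexample confirms that no such reduction can exist. The correct resolution is simply to state disjointness as a hypothesis (and to define $\hat{V}$ arbitrarily, say $\hat V=1$, on the complement of $\bigcup_i\cA_i$, which only costs you in the favorable direction since it can only increase $\Pr\{\hat V=V\}$ relative to $\frac{1}{N}\sum_i P_i(\cA_i)$... in fact it leaves the relevant identity $\Pr\{\hat V = V\}\ge\frac{1}{N}\sum_i P_i(\cA_i)$ intact, which is all the argument needs). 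With that adjustment the proof is complete.
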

\begin{proof}
	See~\cite{Yatracos88} and references therein.
\end{proof}

To apply Lemma~\ref{Lem_Berge} to the problem at hand, we set $N=|\cS_d^t|$ and $P_j = P_{Y|\bX}(\cdot|\bx(j))$, where $\bx(j)$ denotes the set of codewords transmitted to convey the set of messages $j \in \cS_d^t$.
We further define $\cA_j$ as the subset of $\cY^n$ for which the decoder declares the set of messages $j\in\cS_d^t$. Then, the probability of
error in decoding messages $j\in\cS_d^t$ is given by $P_e(j) =1-P_j(\cA_j)$, and $\frac{1}{|\cS_d^t|} \sum_{j\in \cS_d^t} P_j(\cA_j)$ denotes the average probability of correctly decoding a message in $\cS_d^t$.

For two multivariate Gaussian distributions \mbox{${\bf Z}_1 \sim \cN(\boldsymbol {\mu_1 }, \frac{N_0}{2}I)$}
and ${\bf Z}_2 \sim \cN(\boldsymbol {\mu_2}, \frac{N_0}{2}I)$ (where $I$ denotes the identity matrix),
the relative entropy $D({\bf Z}_1\| { \bf Z}_2)$ is given  by $ \frac{ ||\boldsymbol {\mu_1 - \mu_2}||^2}{N_0}$. We next note that $P_{\bw} =  \cN(\overline{\bx}(\bw), \frac{N_0}{2}I)$ and $P_{\bw'} = \cN(\overline{\bx}(\bw'), \frac{N_0}{2}I)$, where $\overline{\bx}(j)$ denotes the sum of codewords contained in $\bx(j)$.
Furthermore, any two messages $\bw, \bw' \in \cS_d^t$ are at a Hamming distance of at most 8. Without loss of generality, let us assume that $w_j = w'_j$ for $j=9, \ldots, \el_n$. Then
\begin{align}
\left\|\sum_{j=1}^{\el_n} \bx_j(w_j) -\sum_{i=1}^{\el_n} \bx_j(w'_j)\right\|^2 & = \left\|\sum_{i=1}^{8} \bx_j(w_j) - \bx_j(w'_j)\right\|^2 \notag \\
& \leq \left\|\sum_{j=1}^{8} |\bx_j(w_j) - \bx_j(w'_j)|\right\|^2 \notag\\
& \leq (8 \times 2\sqrt{E_n})^2 \notag \\
& = 256 E_n \notag
\end{align}
where we have used the triangle inequality and that the energy of a codeword for any user is upper-bounded by $E_n$. Thus, $D(P_{\bw}\| P_{\bw'}) \leq 256 E_n/N_0$.

It  follows from Birg\'e's inequality that
\begin{align}
\frac{1}{|\cS_d^t|} \sum_{\bw \in \cS_d^t} P_e(\bw) & \geq 1  -   \frac{ 256 E_n/N_0+\log 2}{\log (|\cS_d^t|-1)} \notag \\
&  \geq 1  -   \frac{ 256 E_n/N_0+\log 2}{\log \el_n } \label{Eq_prob_set_lowr}
\end{align}
where the last step holds because  $ |\cS_d^t|-1 \geq \el_n$. This proves~\eqref{Eq_avg_prob_lowr} and hence also~\eqref{Eq_prob_typ_lowr}.

Combining~\eqref{Eq_prob_typ_lowr} and~\eqref{Eq_avg_prob_err3}, we obtain
\begin{align*}
P_{e}^{(n)} & \geq \left( 1  -   \frac{ 256 E_n/N_0+\log 2}{\log \el_n } \right)  \sum_{i=1}^{\el_n}  \text{Pr}(\cT_ {\bw}^{i}) \\
&  = \left( 1  -   \frac{ 256 E_n/N_0+\log 2}{\log \el_n } \right) (1-\text{Pr}(\cT_{\bw}^{0} )). 
\end{align*}
By the assumption $k_n = \Omega(1)$, the probability 
$\text{Pr}(\cT_{\bw}^{0} ) = \left( (1-\alpha_n)^{\frac{1}{\alpha_n}}\right)^{k_n}$ converges to a value strictly less than one.  Consequently, $P_{e}^{(n)}$ tends to zero only if 
\begin{align}
E_n  & = \Omega\left(\log \el_n \right).\notag
\end{align}
This proves Lemma~\ref{Lem_energy_bound}.

\section{Proof of Lemma~\ref{Lem_detect_uppr}}
\label{Sec_appnd_ortho}
Let $\bY_1$ denote the received vector of length $n/\el_n$ corresponding to user 1 in the orthogonal-access scheme.
 From the pilot signal, which is the first symbol $Y_{11} $ of $\bY_1$, the receiver guesses whether user 1 is active or not. Specifically, the user is estimated as active if $Y_{11} > \frac{\sqrt{tE_n}}{2}$ and as inactive otherwise.
If the user is declared as active, then the receiver decodes the message from the rest of $\bY_1$.
Let $\Pr( \hat{W}_1 \neq w |W_1 = w)$ denote the decoding error probability when  message $w,w=0, \ldots, M_n$ was transmitted.
Then, $P_1$ is given by
\begin{align}
P_1 & = (1-\alpha_n)\Pr( \hat{W}_1 \neq 0) + \frac{\alpha_n}{M_n} \sum_{w=1}^{M_n} \Pr( \hat{W}_1 \neq w |W_1 = w) \notag \\
& \leq \Pr( \hat{W}_1 \neq 0|W_1=0) + \frac{1}{M_n} \sum_{w=1}^{M_n} \Pr( \hat{W}_1 \neq w  | W_1 = w). \label{Eq_err_prob_uppr}
\end{align}
If $W_1=0$, then an error occurs  if $Y_{11} > \frac{\sqrt{tE_n}}{2}$. So, we have
\begin{align}
\Pr( \hat{W}_1 \neq 0|W_1=0) & = Q\left( \frac{\sqrt{tE_n}}{2} \right). \label{Eq_err_prob_uppr2}
\end{align}
Let $\cE_{11}$ denote the event $Y_{11} \leq \frac{\sqrt{tE_n}}{2}$ and $D_w$ denote the  error event in decoding message $w$ for the transmission scheme described in Section~\ref{Sec_proof_ortho_access} when the user is known to be active. Then, for every $w=1,\ldots,M_n$
\begin{align}
\Pr( \hat{W}_1 \neq w |W_1 = w) & =   \Pr(\cE_{11} \cup  \{ \cE_{11}^c \cap \hat{W}_1 \neq w \}| W_1 = w) \notag  \\
	& \leq  \Pr(\cE_{11}| W_1 = w)  + \Pr(  \cE_{11}^c | W_1 = w)  \Pr(  \hat{W}_1 \neq w | W_1 = w, \cE_{11}^c  ) \notag \\
	& \leq   \Pr(\cE_{11}| W_1 = w)  +  \Pr( D_w | W_1 = w) \notag
\end{align}
where the last step follows because $\Pr(\cE_{11}^c|W_1=w)\leq 1$ and by the definition of $D_w$. 

We  next define $\Pr(D) = \frac{1}{M_n} \sum_{w=1}^{M_n} \Pr(D_w)$. Since $P(\cE_{11} | W_1 =w)  = Q\left( \frac{\sqrt{tE_n}}{2} \right)$,
it  follows from~\eqref{Eq_err_prob_uppr} that
\begin{align}
P_1 & \leq 2 Q\left( \frac{\sqrt{tE_n}}{2} \right)+ P(D). \label{Eq_singl_usr_uppr}
\end{align}
We next upper-bound $P(D)$. To this end, we use the following upper bound on the average probability of error $P(\mathcal{E})$ of the Gaussian point-to-point channel for a code of blocklength $n$ with power $P$~\cite[Section~7.4]{Gallager68}
\begin{align}
P(\cE) & \leq  M_n^{ \rho}  \exp[-nE_0(\rho, P)],  \; \mbox{ for every } 0< \rho \leq 1 \label{Eq_upp_dec_AWGN} 
\end{align}
where 
\begin{align}
E_0(\rho, P) & \triangleq \frac{\rho}{2} \ln \left(1+\frac{2P}{(1+\rho)N_0}\right). \notag
\end{align} 
 By substituting in~\eqref{Eq_upp_dec_AWGN} $n$ by $\frac{n}{\el_n} - 1$ and  $P$ by $P_n = \frac{(1-t)E_n}{\frac{n}{\el_n} -1}$, we obtain that $P(D)$ can be upper-bounded in terms of the rate per unit-energy $\CR=\frac{\log M_n}{E_n}$ as follows:
\begin{align}
P(D) & \leq  M_n^{ \rho}  \exp\left[-\left(\frac{ n}{\el_n}-1\right)E_0(\rho, P_n)\right] \nonumber \\
& = \exp\left[   \rho  \ln M_n -  \left(\frac{ n}{\el_n}-1\right) \frac{\rho}{2} \ln \left(1+\frac{ 2E_n(1-t)}{ \left(\frac{ n}{\el_n}-1\right)(1+\rho)N_0}\right) \right] \nonumber \\
& = \exp\left[ -E_n(1-t) \rho   \left(  \frac{\ln \left(1+\frac{ 2E_n(1-t)}{ \left(\frac{ n}{\el_n}-1\right)(1+\rho)N_0}\right)}{ \frac{2E_n(1-t)}{ \left(\frac{ n}{\el_n}-1\right)} } -\frac{\CR}{(1-t) \log e} \right)\right]. \label{Eq_err_uppr}
\end{align}

We next choose $E_n = c_n \ln n$ with $c_n \triangleq \ln\bigl(\frac{n}{\el_n\ln n}\bigr)$. Since, by assumption, $\el_n = o(n / \log n)$, this implies that $\frac{\el_nE_n}{n} \to 0$ as $n \to \infty$, hence 
$\frac{E_n}{n/\el_n -1} \to 0$. Thus,
the first term in the inner most bracket in \eqref{Eq_err_uppr} tends to $1/((1+\rho)N_0)$ as $n \to \infty$. It follows that for $\CR < \frac{\log e}{N_0}$, there exists a sufficiently large $n'_0$, a $ t > 0$, a $\rho > 0$, and a $\delta>0$ such that, for $n\geq n'_0$, the RHS of \eqref{Eq_err_uppr} is upper-bounded by $\exp[-E_n(1-t) \rho \delta]$. It follows that, for our choice $E_n=c_n\ln n$, we have for $n\geq n_0'$
\begin{align}
P(D) & \leq \exp \left[ \ln \left(\frac{1}{n}\right)^{c_n\delta \rho(1-t)} \right]. \notag
\end{align}
 Since $c_n \to \infty $ as $n\to\infty$, and hence also
$c_n\delta \rho(1-t) \to \infty$, this yields
\begin{align}
P(D) & \leq \frac{1}{n^2} \label{Eq_act_dec_uppr}
\end{align}
for sufficiently large $n \geq n_0'$.

Similary, for $n\geq \tilde{n}_0$ and sufficiently large $\tilde{n}_0$, we can upper-bound
\begin{equation}
2 Q\left(\frac{\sqrt{t E_n}}{2}\right) \leq \frac{1}{n^2} \label{Eq_usr_det_uppr}
\end{equation}
by upper-bounding the $Q$-function as $Q(\beta)\leq \frac{e^{-\beta^2/2}}{\sqrt{2\pi}\beta}$ and evaluating the resulting bound for $E_n=c_n\ln n$.
Using~\eqref{Eq_act_dec_uppr} and~\eqref{Eq_usr_det_uppr} in~\eqref{Eq_singl_usr_uppr}, we obtain for $n \geq \max(\tilde{n}_0,n_0')$ that
\begin{align}
P_1 \leq \frac{2}{n^2}. \notag
\end{align}
This proves Lemma~\ref{Lem_detect_uppr}.

\fi



\end{document}